  \providecommand\BibTeX{{%
    \normalfont B\kern-0.5em{\scshape i\kern-0.25em b}\kern-0.8em\TeX}}}
\theoremstyle{definition}
\newtheorem{notation}{Notation}
\newtheorem{remark}{Remark}
\DeclareMathOperator{\ord}{ord}
\DeclareMathOperator{\Znn}{\mathbb{Z}_{\geqslant 0}}
\begin{document}
\fancyhead{}
\title[On realizing differential-algebraic equations by rational dynamical systems]{On realizing differential-algebraic equations\\ by rational dynamical systems}

\author{Dmitrii Pavlov}
\email{dmmpav@gmail.com}
\affiliation{%
  \institution{Faculty of Mechanics and Mathematics,\\ Moscow State University}
  \city{Moscow}
  \country{Russia}
  \postcode{119991}
}

\author{Gleb Pogudin}
\email{gleb.pogudin@polytechnique.edu}
\affiliation{%
  \institution{LIX, CNRS, \'Ecole Polytechnique,\\ Institute Polytechnique de Paris}
  \city{Palaiseau}
  \country{France}
  \postcode{91120}
}


\begin{abstract}
  Real-world phenomena can often be conveniently described by dynamical systems (that is, ODE systems in the state-space form). However, if one observes the state of the system only partially, the observed quantities (outputs) and the inputs of the system can typically be related by more complicated differential-algebraic equations (DAEs). Therefore, a natural question (referred to as the realizability problem) is: given a differential-algebraic equation (say, fitted from data), does it come from a partially observed dynamical system? A special case in which the functions involved in the dynamical system are rational is of particular interest. For a single differential-algebraic equation in a single output variable, Forsman has shown that it is realizable by a rational dynamical system if and only if the corresponding hypersurface is unirational, and he turned this into an algorithm in the first-order case.
  
   In this paper, we study a more general case of single-input-single-output equations. We show that if a realization by a rational dynamical system exists, the system can be taken to have the dimension equal to the order of the DAE. We provide a complete algorithm for first-order DAEs. We also show that the same approach can be used for higher-order DAEs using several examples from the literature.
\end{abstract}


\begin{CCSXML}
<ccs2012>
   <concept>
       <concept_id>10010147.10010148.10010149.10010152</concept_id>
       <concept_desc>Computing methodologies~Symbolic calculus algorithms</concept_desc>
       <concept_significance>500</concept_significance>
       </concept>
 </ccs2012>
\end{CCSXML}

\ccsdesc[500]{Computing methodologies~Symbolic calculus algorithms}

\keywords{differential-algebraic equations, rational dynamical systems, realization theory}

\maketitle

\section{Introduction}

Many processes in the sciences and engineering are described by systems of differential equations.
One of the prominent classes of systems of differential equations are systems in~\emph{the state-space form}:
\begin{equation}\label{eq:stspace1}
  \mathbf{x}' = \mathbf{f}(\mathbf{x}, \mathbf{u}),
\end{equation}
where $\mathbf{x} = (x_1, \ldots, x_n)$ are the unknowns describing the state of the system (\emph{state variables}), $\mathbf{u} = (u_1, \ldots, u_m)$ are the unknowns representing external forces (\emph{input variables}), and $\mathbf{f} = (f_1, \ldots, f_n)$ are the functions describing how the rate of change of the state depends on the state and external inputs.

A typical experimental setup contains an assumption that the functions $\mathbf{u}$ are known while the states $\mathbf{x}$ may be only partially observed.
In order to encode this constraint into the system, one augments~\eqref{eq:stspace1} with the \emph{output variables} $\mathbf{y} = (y_1, \ldots, y_\ell)$ and the equations describing the observations
\begin{equation}\label{eq:stspace2}
  \mathbf{y} = \mathbf{g}(\mathbf{x}, \mathbf{u}).
\end{equation}
Thus, one typically has time course data for $\mathbf{y}$ and $\mathbf{u}$ only, not for $\mathbf{x}$.
Therefore, one may be able to fit the equations satisfied by $\mathbf{y}$ and $\mathbf{u}$, but not the original~\eqref{eq:stspace1} and~\eqref{eq:stspace2}.

The question of reconstructing a system in the state-space form (that is, \eqref{eq:stspace1} with~\eqref{eq:stspace2}) which explains a given set of relations between $\mathbf{y}$ and $\mathbf{u}$ is called~\emph{the realizability problem} and it is one of important problems in control theory.
This problem is well studied for linear systems, see e.g.~\cite{Kalman1963, Silverman1971}.
In the nonlinear case, there are several versions of the problem depending on where $\mathbf{f}$ and $\mathbf{g}$ are sought.
Two popular classes considered in this paper are rational functions and input-affine rational functions as in~\cite{SontagYuan, NS1, NS2}, but one could also consider algebraic, analytic, or smooth functions~\cite{Algebraic1, Kotta2018, Sussmann1976, Schaft1986}.
From the constructive standpoint, the case of single-output-no-input systems (for which rational and input-affine rational functions coincide) has been considered by Forsman~\cite{Forsman1993}.
He has shown that a DAE in $y$ can be realized by a rational system in the state-space form if and only if the corresponding hypersurface is unrational.
In particular, an algorithm for the first-order DAE was proposed.
For higher order, although the general problem of assessing unirationality is notoriously hard, many theoretical results are available~\cite{rational_book} which could be used to find sufficient or necessary conditions for realizability.

The goal of the present paper is to consider the realization problem in the presence of inputs.
Our contribution is two-fold.
On the theoretical side, we prove that if a DAE of order $h$ can be realized by a system in the state-space form, then it can be realized by a system of dimension $h$ (that is, by a locally observable one).
This result is related to a theorem by Sussmann~\cite{Sussmann1976} and its analogues for rational realizations~\cite{NS2} (see also~\cite{minimal, Kotta2018}) which state that, for a realization problem (analytic or rational), if a realization exists, it can always be taken to be observable at the expense of allowing a realization to be defined not on an affine space but on an arbitrary variety.
We achieve only local observability but guarantee the existence of a realization defined on an affine space.
Note that our result is sharp in the sense that there exist realizable DAEs without observable realizations by a system of the state-space form, see~\cite[Section~4]{Forsman1993}.

On the computational side, we use the developed theory to propose algorithms for solving both rational and input-affine rational realization problems for first-order single-output-single-input DAEs.
We also show, using examples from the literature, that an approach similar to the one we use for the first-order case, can be successfully applied for DAEs of higher order as well (see Section~\ref{sec:examples} and Appendix).

The rest of the paper is organized as follows. 
Section~\ref{sec:prelim} contains a precise statement of the realizability problem.
Theoretical results are stated and proved in Section~\ref{sec:theory}.
Sections~\ref{sec:zero-ord} and~\ref{sec:first-ord} contain our algorithms and proofs of their correctness.
Finally, Section~\ref{sec:examples} and Appendix contain several worked out examples from the literature.
Maple worksheets with the examples are available at~\cite{MapleNotebooks}.


\section{Preliminaries}\label{sec:prelim}

In this paper, we will use the language of differential algebra which we introduce in Section~\ref{sec:diffalg}.
The main problem studied in this paper, the realization problem, can be viewed as an inverse problem to the differential elimination problem for dynamical systems, so we first introduce the elimination problem in Section~\ref{sec:elim}, and then define the realization problem in Section~\ref{sec:realize}.

Throughout Sections~\ref{sec:prelim} and~\ref{sec:theory}, $k$ is an algebraically closed field of zero characteristic (e.g., $\mathbb{C}$).
In Sections~\ref{sec:zero-ord} and~\ref{sec:first-ord}, it will be additionally assumed to be constructive.
For affine varieties $X$ and $Y$, a rational map $\varphi$ from $X$ to $Y$ will be denoted by $\varphi\colon X \dashrightarrow Y$.
The corresponding map $k(Y) \to k(X)$ will be denoted by $\varphi^*$.


\subsection{Differential algebra}\label{sec:diffalg}

\begin{definition}[Differential rings and fields]\label{def:diffrings}
  A {\em differential ring} $(R,\,')$ is a commutative ring with a derivation $'\!\!:R\to R$, that is, a map such that, for all $a,b\in R$, $(a+b)'=a'+b'$ and $(ab)'=a'b+ab'$. 
  A {\em differential field} is a differential ring that is a field.
  For  $i>0$,  $a^{(i)}$ denotes the $i$-th order derivative of $a \in R$.
  An element $a \in R$ of a differential ring is said to be \emph{a constant} if $a' = 0$.
\end{definition}

\begin{notation}
  Let $x$ be an element of a differential ring and $h \in \Znn$. We introduce
  \[
      x^{(<h)} := (x, x', \ldots, x^{(h - 1)})\;\text{ and }\;
      x^{(\infty)} := (x, x', x'', \ldots).
  \]
  $x^{(\leqslant h)}$ is defined analogously.
\end{notation}

\begin{definition}[Differential polynomials]
  Let $R$ be a differential ring. 
  Consider a ring of polynomials in infinitely many variables
  \[
  R[x^{(\infty)}] := R[x, x', x'', x^{(3)}, \ldots]
  \]
  and extend the derivation from $R$ to this ring by $(x^{(j)})' := x^{(j + 1)}$.
  The resulting differential ring is called \emph{the ring of differential polynomials in $x$ over $R$}.
  The ring of differential polynomials in several variables is defined by iterating this construction.
\end{definition}

\begin{notation}
  For a differential polynomial $p \in k[x^{(\infty)}]$, we define the order of $p$ with respect to $x$ (denoted by $\ord_x p$) as the largest integer $i$ such that $x^{(i)}$ appears in $p$.
  If no such $i$ exists, we define $\ord_x p = -1$.
\end{notation}

\begin{definition}[Differential ideals]
   Let $R$ be a differential ring.
   An ideal $I \subset R$ is called a \emph{differential ideal} if $a' \in I$ for every $a \in I$.
  
   One can verify that, for every $f_1, \ldots, f_s \in R$, the ideal
   \[
      \langle f_1^{(\infty)}, \ldots, f_s^{(\infty)} \rangle
    \]
    is a differential ideal.
    Moreover, this is the minimal differential ideal containing $f_1, \ldots, f_s$, and we will denote it by $\langle f_1, \ldots, f_s \rangle^{(\infty)}$.
\end{definition}

\begin{notation}[Saturation]
  Let $R$ be a ring, $I \subset R$ be an ideal, and $a \in R$.
  We introduce
  \[
      I \colon a^{\infty} := \{b \in R \mid \exists N \in \mathbb{Z}_{\geqslant 0}  \colon a^Nb \in I\},
  \]
  which is also an ideal in $R$.
\end{notation}


\subsection{Direct problem: Differential elimination}\label{sec:elim}

Consider an ODE system in the so-called \emph{state-space form}:
 \begin{equation}\label{eq:ODEmodel}
   \Sigma = \begin{cases}
    \mathbf{x}' = \mathbf{f}(\mathbf{x}, \mathbf{u}),\\
    y = g(\mathbf{x}, \mathbf{u}),
    \end{cases}
   \end{equation}
where 
\begin{itemize}
    \item $\mathbf{x} = (x_1, \ldots, x_n)$ and $\mathbf{u} = (u_1, \ldots, u_m)$ are the vectors of state and input variables, respectively; the number~$n$ is called the \emph{dimension} of the system;
    \item $y$ is a single output variable (there may be several outputs but in this paper we restrict ourselves to the single-output case);
    \item $\mathbf{f} = (f_1, \ldots, f_n)$, where $f_1, \ldots, f_n \in k(\mathbf{x}, \mathbf{u})$, $g \in k(\mathbf{x}, \mathbf{u})$ and $k$ is a constant coefficient field.
\end{itemize}

The system~\eqref{eq:ODEmodel} is called \emph{input-affine} if $f_1, \ldots, f_n, g$ are affine (i.e. polynomials of degree~$1$) with respect to $\mathbf{u}$.

Bringing $f_1, \ldots, f_n, g$ to the common denominator,  write $\mathbf{f} = \mathbf{F}/Q$ and $g = G/Q$, where $\mathbf{F} = (F_1, \ldots, F_n)$ and $F_1, \ldots, F_n, G, Q \in k[\mathbf{x}, \mathbf{u}]$.
Consider the differential ideal
\begin{equation}\label{eq:Isigma_gens}
   I_\Sigma := \langle Qx_1' - F_1, \ldots, Qx_n' - F_n, Qy - G\rangle^{(\infty)} \colon Q^\infty \subset k[\mathbf{x}^{(\infty)}, y^{(\infty)}, \mathbf{u}^{(\infty)}]
\end{equation}
which is prime by~\cite[Lemma~3.2]{HOPY2020}.
Note that every element of $I_\Sigma$ vanishes on every analytic or formal power series solution (w.r.t~$\mathbf{x}, \mathbf{u}$ and~$y$) of~\eqref{eq:ODEmodel}.

\begin{definition}[Input-output equation]
We define the \emph{ideal of input-output relations of~\eqref{eq:ODEmodel}} as
\[
    J_{\Sigma} := I_{\Sigma} \cap k[y^{(\infty)}, \mathbf{u}^{(\infty)}].
\]
These relations play an important role in control theory~\cite{Conte2007, Sontag1998} since they only involve functions for which experimental data is typically available (i.e., inputs and outputs).
For the single-output case, which we consider in this paper, it is known~\cite[Remark 2.20]{structidjl} that $J_\Sigma$ is ``almost principal'', that is, if $P$ is an irreducible nonzero polynomial in $J_\Sigma$ of minimal possible order w.r.t~$y$ (which is unique up to a multiplicative constant), then
\begin{equation}\label{eq:ioeq_ideal}
  J_\Sigma = \langle P \rangle^{(\infty)} \colon H^\infty,
\end{equation}
where $H = \frac{\partial P}{\partial y^{(h)}}$ and $h = \ord_y P$.
Such $P$ typically is referred to as \emph{the input-output equation} of $\Sigma$ and, thanks to~\eqref{eq:ioeq_ideal}, fully characterizes the input-output behavior of~$\Sigma$.
\end{definition}

\begin{remark}[Multiple-input case]
  Throughout the paper, we will focus on the case of single input.
  We expect that the same methods will work for the multiple-input case, and we plan to elaborate on this in an extended version of the paper.
\end{remark}



\subsection{Inverse problem: Realization}\label{sec:realize}

Now we are ready to define \emph{the realization problem} which is the main problem of this paper.
\begin{description}
  \item[Input:] an irreducible differential polynomial $P \in k[y^{(\infty)}, \mathbf{u}^{(\infty)}]$, where $\mathbf{u} = (u_1, \ldots, u_m)$;
  \item[Output 1:]\label{out:rational} a system of the form~\eqref{eq:ODEmodel} such that $P$ is the input-output equation for this system or \texttt{NO} if there is no such system;
  \item[Output 2:]\label{out:affine} an input-affine system of the  form~\eqref{eq:ODEmodel} such that $P$ is the input-output equation for this system or \texttt{NO} if there is no such system;
\end{description}
We will refer to the cases of~\textbf{Output~1} and~\textbf{Output~2} as \emph{the rational realization problem} and~\emph{input-affine rational realization problem}, respectively.

If there is a realization of $P$, it is typically not unique (e.g., in can be composed with any invertible change of coordinates).
Therefore, after constructing a realization, one may want to perform change of variables in the resulting system to obtain a more ``interpretable'' or ``insightful'' realization (see~\cite{Chris, COB} for related results).
This second step if out of scope of the present paper.





\section{Theory}\label{sec:theory}

In the rest of the paper, we will use the notation $\mathbb{U} := \overline{k(u^{(\infty)})}$.

\subsection{General realizability criterion}

\begin{notation}[Lie derivatives]
\begin{itemize}
  \item[]
  \item Let $R \in K(u^{(\infty)})$ be a rational function in $u^{(\infty)}$ over a field $K$ (we will use $K = k(\mathbf{x})$).
  We define $D_u$ as
  \[
    D_u (R) := \sum\limits_{j = 0}^\infty u^{(j + 1)} \frac{\partial R}{\partial u^{(j)}}.
  \]

  \item Consider a system~$\Sigma$ as in~\eqref{eq:ODEmodel}.
  For any rational function $R \in k(\mathbf{x}, u^{(\infty)})$, we define the Lie derivative w.r.t.~$\Sigma$ by
  \[
    \mathcal{L}_{\Sigma} (R) := \sum\limits_{i = 1}^n f_i \frac{\partial R}{\partial x_i} + D_u (R).
  \]
  
  \item The Lie derivative of~$R$ of order~$i$ w.r.t. $\Sigma$ is obtained by iteratively applying the formula above and is denoted by~$\mathcal{L}_{\Sigma}^i (R)$.
 \end{itemize}
\end{notation}

\begin{remark}[Lie derivative as a derivation]\label{Liefacts}
  Consider
  the differential ideal $\widetilde{I}_\Sigma$ generated by $I_\Sigma$ in the differential ring $k(\mathbf{x}, u^{(\infty)})[(\mathbf{x}')^{(\infty)}, y^{(\infty)}]$ with derivation $'$.
  \begin{itemize}
      \item The primality of $I_\Sigma$ and \cite[Lemma~3.1]{HOPY2020} imply that $\widetilde{I}_\Sigma$ is proper and $\widetilde{I}_\Sigma \cap k[\mathbf{x}^{(\infty)}, y^{(\infty)}, u^{(\infty)}] = I_\Sigma$.
      \item For every $R \in k(\mathbf{x}, u^{(\infty)})$, we have $\mathcal{L}_{\Sigma}(R) - R' \in \widetilde{I}_\Sigma$.
      \item $k(x, u^{(\infty)})$ is a differential field w.r.t. the derivation~$\mathcal{L}_\Sigma$.
  \end{itemize}
\end{remark}

\begin{notation}[Corresponding hypersurface]
  Consider an irreducible differential polynomial $P(y, u) \in k[y^{(\infty)}, u^{(\infty)}]$ with $\ord_y P = h$.
  Then the hypersurface defined by $P = 0$ in the affine space with the coordinates $y, y',\ldots, y^{(h)}$ over the field $\mathbb{U}$ will be denoted by $\mathcal{H}_P$ and referred to as \emph{the corresponding hypersurface}.
\end{notation}

\begin{lemma}[Realizability criterion]\label{lem:linear_system}
  Let $P(y, u) \in k[y^{(\infty)}, u^{(\infty)}]$ be irreducible with $\ord_y P = h$.
  There exists a rational (resp. input-affine rational) realization of $P$ of dimension~$n$ if and only if there exists an integer $n$ and a dominant (i.e. such that its image is Zariski dense) map $\bm{\gamma} \colon \mathbb{A}_{\mathbb{U}}^{n} \dashrightarrow \mathcal{H}_P$ defined over~${k(u^{(\infty)})}$ with $\bm{\gamma} = (\gamma_0, \ldots, \gamma_h)$ and the coordinates in $\mathbb{A}_{\mathbb{U}}^n$ being $\mathbf{x} = (x_1, \ldots, x_n)$ such that $\gamma_0 \in k(\mathbf{x}, u)$ (resp., $\gamma_0 \in k(\mathbf{x}) + k(\mathbf{x})u$) and the following linear system in $Z_1, \ldots, Z_n$
    \begin{equation}\label{eq:linear_system_main}
  \begin{pmatrix}
  \gamma_1 - D_u(\gamma_0)\\
  \vdots\\
  \gamma_h - D_u(\gamma_{h - 1})
  \end{pmatrix}
  =
  \begin{pmatrix}
    \frac{\partial \gamma_{0}}{\partial x_1} & \ldots & \frac{\partial \gamma_{0}}{\partial x_n}\\
    \vdots & \ddots & \vdots\\
    \frac{\partial \gamma_{h - 1}}{\partial x_1} & \ldots & \frac{\partial \gamma_{h - 1}}{\partial x_n}
  \end{pmatrix}
  \begin{pmatrix}
    Z_1\\
    \vdots\\
    Z_n
  \end{pmatrix}
  \end{equation}
  has a solution in $k(\mathbf{x}, u)$ (resp., in $k(\mathbf{x}) + k(\mathbf{x})u$).
\end{lemma}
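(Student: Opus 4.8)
The plan is to prove both directions by unwinding what it means for $P$ to be the input-output equation of a system $\Sigma$, and by using the Lie-derivative formalism from Remark~\ref{Liefacts} to translate the differential-algebraic data into the algebraic-geometric data of a dominant map $\bm\gamma$.

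For the forward direction, suppose $\Sigma$ as in~\eqref{eq:ODEmodel} has $P$ as its input-output equation. First I would note that in $k(\mathbf x, u^{(\infty)})$, equipped with the derivation $\mathcal L_\Sigma$ (which is a differential field by Remark~\ref{Liefacts}), the element $y = g(\mathbf x, \mathbf u) = \gamma_0$ satisfies $y^{(i)} \equiv \mathcal L_\Sigma^i(\gamma_0)$ modulo $\widetilde I_\Sigma$; setting $\gamma_i := \mathcal L_\Sigma^i(\gamma_0)$ gives rational functions in $k(\mathbf x, u^{(\infty)})$. Since $P \in J_\Sigma \subseteq I_\Sigma \subseteq \widetilde I_\Sigma$, we get $P(\gamma_0, \ldots, \gamma_h, u) = 0$ in $k(\mathbf x, u^{(\infty)})$, so $\bm\gamma = (\gamma_0, \ldots, \gamma_h)$ maps $\mathbb A^n_{\mathbb U}$ into $\mathcal H_P$. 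The map is dominant because $P$ has minimal order in $J_\Sigma$: if the image were contained in a proper subvariety, there would be a nonzero polynomial of order $\le h$ vanishing on $\gamma_0,\ldots,\gamma_h$; one shows such a polynomial would lie in $J_\Sigma$ and, by minimality of $\ord_y P$ and the almost-principality~\eqref{eq:ioeq_ideal}, would be a multiple of $P$ — so the image is exactly the (dense) image of the hypersurface, hence Zariski dense. Finally, the defining recursion $\gamma_{i+1} = \mathcal L_\Sigma(\gamma_i) = \sum_j f_j \frac{\partial \gamma_i}{\partial x_j} + D_u(\gamma_i)$ is exactly the system~\eqref{eq:linear_system_main} with $Z_j := f_j \in k(\mathbf x, \mathbf u)$; in the input-affine case $g$ is affine in $\mathbf u$, so $\gamma_0 \in k(\mathbf x) + k(\mathbf x)u$ and the $f_j$ lie in $k(\mathbf x) + k(\mathbf x)u$, giving the solution in the required ring. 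Here one must check that $\gamma_0$ actually involves only $u$ (not higher derivatives), which is immediate from $g \in k(\mathbf x, \mathbf u)$, and that the $\gamma_i$ produced by the recursion match the $D_u$-shifted form appearing on the left-hand side of~\eqref{eq:linear_system_main} — this is a direct computation splitting $\mathcal L_\Sigma = \sum f_j \partial_{x_j} + D_u$.

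For the converse, given such a dominant $\bm\gamma$ and a solution $Z_1, \ldots, Z_n \in k(\mathbf x, u)$ (resp.\ in $k(\mathbf x) + k(\mathbf x)u$) of~\eqref{eq:linear_system_main}, I would define $\Sigma$ by $x_j' = Z_j(\mathbf x, u) =: f_j$ and $y = \gamma_0 =: g$. One has to show $P$ is the input-output equation of this $\Sigma$. By construction $\mathcal L_\Sigma(\gamma_i) = \sum_j Z_j \frac{\partial \gamma_i}{\partial x_j} + D_u(\gamma_i) = \gamma_{i+1}$ for $i < h$ (this is precisely~\eqref{eq:linear_system_main}), so in $k(\mathbf x, u^{(\infty)})$ with derivation $\mathcal L_\Sigma$ we have $y^{(i)} = \gamma_i$ for $i \le h$, whence $P(y, u) = P(\gamma_0, \ldots, \gamma_h, u) = 0$ because $\bm\gamma$ lands in $\mathcal H_P$. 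By Remark~\ref{Liefacts} this means $P \in \widetilde I_\Sigma \cap k[\mathbf x^{(\infty)}, y^{(\infty)}, u^{(\infty)}] = I_\Sigma$, and since $P$ involves only $y, u$ it lies in $J_\Sigma$. It remains to see $P$ has \emph{minimal} order in $J_\Sigma$ and is the generator in the sense of~\eqref{eq:ioeq_ideal}: if some nonzero $\widetilde P \in J_\Sigma$ had smaller order $\tilde h < h$, then $\widetilde P(\gamma_0, \ldots, \gamma_{\tilde h}, u) = 0$ identically in $k(\mathbf x, u^{(\infty)})$, contradicting dominance of $\bm\gamma$ (the image would lie in the proper subvariety $\widetilde P = 0$ after projecting to the first $\tilde h + 1$ coordinates); irreducibility of $P$ then forces $J_\Sigma = \langle P\rangle^{(\infty)} : H^\infty$ by the almost-principality result cited in the preliminaries.

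The main obstacle I anticipate is the dominance $\Leftrightarrow$ minimal-order bookkeeping, carried out carefully over the field $\mathbb U = \overline{k(u^{(\infty)})}$ rather than over $k$: one needs that a polynomial relation among $\gamma_0, \ldots, \gamma_{\tilde h}$ with coefficients in $k(u^{(\infty)})$ can be cleared of denominators and interpreted as an element of $J_\Sigma \subseteq k[y^{(\infty)}, u^{(\infty)}]$, and conversely that membership in $J_\Sigma$ of something of order $\le \tilde h$ produces a vanishing relation on the image of $\bm\gamma$. The subtlety is matching "dominant as a map of $\mathbb U$-varieties" with "minimal order of a differential polynomial with coefficients in $k$"; this is where the choice $\mathbb U = \overline{k(u^{(\infty)})}$ and the fact that $P$ is irreducible over $k$ (hence, one must argue, the generator of the relevant prime) are used. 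Everything else is a routine translation between the derivation $'$ and the derivation $\mathcal L_\Sigma$ via the second bullet of Remark~\ref{Liefacts}.
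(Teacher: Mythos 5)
Your proof is correct and follows essentially the same route as the paper: in the forward direction you take $\gamma_i := \mathcal{L}_\Sigma^i(g)$ and observe that $Z_i = f_i$ solves \eqref{eq:linear_system_main}, and in the converse you build $\Sigma$ from a solution of \eqref{eq:linear_system_main} with $g = \gamma_0$ and use dominance to conclude that $P$ is the minimal-order irreducible element of $J_\Sigma$, hence the input-output equation. The only presentational difference is that the paper phrases the dominance/minimal-order step as algebraic independence of $\gamma_0, \ldots, \gamma_{h-1}$ over $\mathbb{U}$, whereas you route it through proper subvarieties and the almost-principality \eqref{eq:ioeq_ideal}; these amount to the same bookkeeping.
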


\begin{proof}
   Assume that $P$ is realizable by a system $\Sigma$ as in~\eqref{eq:ODEmodel} with the dimension of the state space being $n$.
   For every $i \geqslant 0$, we have $y^{(i)} - \mathcal{L}_\Sigma^i(g) \in \widetilde{I}_\Sigma$.
   Since $P \in I_\Sigma$, it is annihilated by~$g$ and its first~$h$ Lie derivatives w.r.t.~$\Sigma$. Thus, we have a map $\bm{\gamma}\colon\mathbb{A}_{\mathbb{U}}^n \dashrightarrow \mathcal{H}_P$ defined by 
   \[
      (x_1, \ldots, x_n) \mapsto (g, \mathcal{L}_{\Sigma}(g), \ldots, \mathcal{L}_{\Sigma}^h(g)).
  \]
  Since $P$ has the minimal order in $J_{\Sigma}$, the elements $g, \mathcal{L}_{\Sigma}(g), \ldots, \mathcal{L}_{\Sigma}^{h - 1}(g)$ are algebraically independent over $\mathbb{U}$ and, thus, $\bm{\gamma}$ is dominant.
  Finally, we observe that the vector $Z_i = f_i$ is a solution to~\eqref{eq:linear_system_main} by the definition of Lie derivative.
  
  In the other direction, assume that there exists such a dominant map $\bm{\gamma}$ and let $f_1, \ldots, f_n$ be a solution of~\eqref{eq:linear_system_main} in $k(\mathbf{x}, u)$ (resp., $k(\mathbf{x}) + k(\mathbf{x})u$).
  Consider a system $\Sigma$
  \[
  \begin{cases}
      x_1' = f_1(x_1, \ldots, x_n, u),\\
      \ldots\\
      x_n' = f_n(x_1, \ldots, x_n, u),\\
      y = \gamma_0(x_1, \ldots, x_n, u).
  \end{cases}.
  \]
  We claim that $P$ is the input-output equation for $\Sigma$.
  Indeed, since $f_1, \ldots, f_n$ is a solution of~\eqref{eq:linear_system_main}, we have $\mathcal{L}_{\Sigma}^i(\gamma_0) = \gamma_i$ for every $0 \leqslant i \leqslant h$.
  Therefore, $P \in J_{\Sigma}$.
  Since $\bm{\gamma}$ is dominant, $\gamma_0, \ldots, \gamma_{h - 1}$ are algebraically independent over $\mathbb{U}$, so $P$ is the irreducible element in $J_{\Sigma}$ of the lowest order, so it is the input-output equation.
\end{proof}


\subsection{Existence of a realization of minimal order}

The goal of this section is to prove the following theorem.

\begin{theorem}\label{thm:sussmann}
  Let $P(y, u) \in k[y^{(\infty)}, u^{(\infty)}]$ be an irreducible differential polynomial with $\ord_y P = h$.
  If there exists a rational (resp., input-affine rational) realization of $P(y, u)$, then there exists a rational (resp., input-affine rational) realization of $P(y, u)$ of dimension~$h$.
\end{theorem}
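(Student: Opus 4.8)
The plan is to translate the statement, via Lemma~\ref{lem:linear_system}, into an assertion about lowering the dimension of a parametrization, and then induct. By Lemma~\ref{lem:linear_system}, a rational (resp.\ input-affine rational) realization of $P$ of dimension $n$ is the same datum as a dominant map $\bm\gamma=(\gamma_0,\dots,\gamma_h)\colon\mathbb{A}_{\mathbb U}^{n}\dashrightarrow\mathcal H_P$ over $k(u^{(\infty)})$ with $\gamma_0\in k(\mathbf x,u)$ (resp.\ $\gamma_0\in k(\mathbf x)+k(\mathbf x)u$), together with a solution $\mathbf f=(f_1,\dots,f_n)$ of~\eqref{eq:linear_system_main} in $k(\mathbf x,u)$ (resp.\ in $k(\mathbf x)+k(\mathbf x)u$); the realization is then $\mathbf x'=\mathbf f$, $y=\gamma_0$, and $\gamma_i=\mathcal L_\Sigma^i(\gamma_0)$. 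Since $\bm\gamma$ is dominant and $\dim\mathcal H_P=h$, one has $n\geqslant h$, so it suffices to show that a realization of dimension $n>h$ produces one of dimension $n-1$, and then iterate.

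For this reduction step I would look for a rational map $\sigma\colon\mathbb A^{n-1}\dashrightarrow\mathbb A^{n}$ \emph{defined over $k$} — it must be over $k$, as otherwise derivatives of $u$ would enter the new equations — such that (i) $\bm\gamma\circ\sigma$ is still dominant onto $\mathcal H_P$, i.e.\ $\gamma_0,\dots,\gamma_{h-1}$ stay algebraically independent over $\mathbb U$ on $\operatorname{im}\sigma$, and (ii) the vector field $\mathbf f(\cdot,u)$ is tangent to $\operatorname{im}\sigma$ for every value of $u$. Using $\gamma_i-D_u(\gamma_{i-1})=\sum_l f_l\,\partial\gamma_{i-1}/\partial x_l$, a direct computation shows that (i)--(ii) are exactly what is needed for the linear system~\eqref{eq:linear_system_main} attached to $\bm\delta:=\bm\gamma\circ\sigma$ to be solvable in $k(\mathbf z,u)$ (resp.\ in $k(\mathbf z)+k(\mathbf z)u$): one solves $(\partial\sigma_l/\partial z_j)\,Z=\mathbf f\circ\sigma$ for $Z=(Z_1,\dots,Z_{n-1})$, which lands in the required ring precisely because of (ii), and then $\mathbf z'=Z$, $y=\gamma_0\circ\sigma$ is, by Lemma~\ref{lem:linear_system}, a realization of dimension $n-1$ with input-output equation $P$.

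In the first-order case $h=1$ this can be made completely explicit, and it underlies the algorithm of Section~\ref{sec:first-ord}. Here the matrix of~\eqref{eq:linear_system_main} is the single row $(\partial\gamma_0/\partial x_1,\dots,\partial\gamma_0/\partial x_n)$, whose entries lie in $k(\mathbf x,u)$ because $\gamma_0$ does; hence its kernel is defined over $k(\mathbf x,u)$, so the space $W$ of solutions of the homogeneous system in $k(\mathbf x,u)^{n}$ has dimension $n-1$, and after a generic $k$-linear change of state coordinates ($\partial\gamma_0/\partial x_n\not\equiv 0$) one may add an element of $W$ to $\mathbf f$ to arrange $f_n\equiv 0$. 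Then $\{x_n=c\}$ is invariant for generic $c\in k$, the functions $\gamma_0,\dots,\gamma_{h-1}$ restrict nondegenerately, and $\sigma$ is simply the inclusion of this hyperplane.

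The general (higher-order) case is the hard part, and I expect it to be the main obstacle. For $h\geqslant 2$ the rows $d_x\gamma_1,\dots,d_x\gamma_{h-1}$ of the matrix in~\eqref{eq:linear_system_main} involve $u',\dots,u^{(h-1)}$, so its kernel — and hence $W$ — need not be defined over $k(\mathbf x,u)$ and can even be trivial; the hyperplane argument then breaks down and one must produce the invariant rational subvariety $\operatorname{im}\sigma$ differently. My approach would combine (1) the Sussmann-type observation that the differential subfield $F:=k(u^{(\infty)})\langle\gamma_0\rangle$ of $\bigl(k(\mathbf x,u^{(\infty)}),\mathcal L_\Sigma\bigr)$ has transcendence degree $h$ over $k(u^{(\infty)})$ and, being contained in the purely transcendental extension $k(u^{(\infty)})(\mathbf x)$, is unirational over $k(u^{(\infty)})$; with (2) the standard fact that a unirational variety is dominated by an affine space of its own dimension (via iterated generic hyperplane sections), which yields a generically finite map $\mathbb A^{h}_{\mathbb U}\dashrightarrow\mathcal H_P$, the derivation extending uniquely since the characteristic is $0$. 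The delicate point — and essentially the entire content — is to carry this reduction out over $k$, so that the resulting output and dynamics remain in $k(\mathbf z,u)$ rather than merely in $k(\mathbf z,u^{(\infty)})$; this is exactly why one is forced to pass to an $\mathcal L_\Sigma$-invariant subvariety rather than an arbitrary quotient, and the first-order argument above is the instance in which that subvariety can be taken to be a hyperplane.
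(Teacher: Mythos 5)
Your translation of the problem via Lemma~\ref{lem:linear_system}, the remark that dominance forces $n\geqslant h$, and your explicit treatment of the case $h=1$ (the matrix of \eqref{eq:linear_system_main} is the single row $(\partial\gamma_0/\partial x_1,\dots,\partial\gamma_0/\partial x_n)$ with entries in $k(\mathbf x,u)$, so one may add a kernel vector to $\mathbf f$ to kill one component and then restrict to a generic hyperplane $x_n=c$) are essentially fine. But the theorem concerns arbitrary $h$, and for $h\geqslant 2$ your text stops at a plan: you yourself single out the ``delicate point'' --- carrying out the dimension reduction so that the output and the dynamics remain in $k(\mathbf z,u)$ (resp.\ $k(\mathbf z)+k(\mathbf z)u$) rather than merely in $k(\mathbf z,u^{(\infty)})$ --- and you do not resolve it. That point is not a technicality; it is precisely the content of the theorem, so the proposal has a genuine gap.

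Moreover, the route you commit to --- a map $\sigma$ over $k$ whose image is tangent to the vector field (your condition (ii)), i.e.\ an $\mathcal L_\Sigma$-invariant rational subvariety of dimension $h$ --- demands more than is needed, and nothing in your unirationality/generic-section sketch produces such an invariant subvariety defined over $k$. The paper's proof avoids invariance altogether: after permuting coordinates so that the first $h$ columns of the Jacobian $J_p$ of $p_i=\mathcal L_\Sigma^i(g)$, $0\leqslant i<h$, are independent, it specializes $x_{h+1},\dots,x_n$ to generic constants (a coordinate subspace that is in general \emph{not} invariant) and builds a \emph{new} vector field on $\mathbb A^h_{\mathbb U}$ by re-solving \eqref{eq:linear_system_main} for the specialized map. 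The key input making this work is exactly what your argument lacks: writing each $p_i$ as a polynomial in $u',u'',\dots$ with coefficient set $\mathcal C\subset k(\mathbf x,u)$, Lemma~\ref{lem:alg_dep} and Corollary~\ref{cor:alg_dep} give $\operatorname{trdeg}_{k(u)}k(u,\mathcal C)=h$; the factorization $J_p=U\cdot J_{\mathcal C}$, with $U$ a matrix of distinct monomials in $u',u'',\dots$ having trivial right kernel over $k(\mathbf x,u)$, reduces solvability of \eqref{eq:linear_system_main} to solvability of a system whose matrix $J_{\mathcal C}$ is defined over $k(\mathbf x,u)$ and has rank $h$, and this solvability (together with its input-affine analogue, where the matrix splits into two blocks built from $J_{\mathcal C}$) survives the specialization of the superfluous variables. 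Without an argument of this kind, or an actual proof that an $h$-dimensional $\mathcal L_\Sigma$-invariant rational subvariety over $k$ exists, the case $h\geqslant 2$ --- and hence the theorem --- is not established.
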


We start with the following lemma.

\begin{lemma}\label{lem:alg_dep}
  Let $\Sigma$ be a system of the form~\eqref{eq:ODEmodel} with the right-hand side being polynomial in $u$.
  Let $p_1, \ldots, p_s \in k(\mathbf{x})[u^{(\infty)}]$ be algebraically independent over $k(u^{(\infty)})$ such that, for every $1 \leqslant i \leqslant s$, $\mathcal{L}_{\Sigma}(p_i)$ is algebraic over $k(u^{(\infty)}, p_1, \ldots, p_s)$.
  
  Let $\mathcal{C}$ be the set of the coefficients of $p_1, \ldots, p_s$ considered as polynomials in $u^{(\infty)}$.
  Then $\operatorname{trdeg}_{k} k(\mathcal{C}) = s$.
\end{lemma}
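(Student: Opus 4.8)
The plan is to show that the coefficients in $\mathcal{C}$ generate a field of transcendence degree exactly $s$ over $k$; since there are finitely many of them, the nontrivial content is the lower bound $\operatorname{trdeg}_k k(\mathcal{C}) \geqslant s$, because the upper bound follows from the fact that $p_1, \ldots, p_s$ are algebraically independent over $k(u^{(\infty)})$ and each $p_i$ is a polynomial in $u^{(\infty)}$ with coefficients in $k(\mathcal{C})$, so $p_1, \ldots, p_s$ are algebraic over $k(\mathcal{C})(u^{(\infty)})$, forcing $s \leqslant \operatorname{trdeg}_k k(\mathcal{C})$ by additivity of transcendence degree over the tower $k \subseteq k(\mathcal{C}) \subseteq k(\mathcal{C})(u^{(\infty)}) = k(\mathcal{C}, u^{(\infty)})$ and the observation that adjoining the $u^{(j)}$ is purely transcendental.

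For the other inequality $\operatorname{trdeg}_k k(\mathcal{C}) \leqslant s$, the idea is to produce a model-theoretic or Wronskian-style specialization argument showing that the coefficients cannot be ``too independent.'' Concretely, I would argue by contradiction: suppose $\operatorname{trdeg}_k k(\mathcal{C}) \geqslant s + 1$. Pick $s+1$ elements $c_0, c_1, \ldots, c_s \in \mathcal{C}$ algebraically independent over $k$. The strategy is to exploit the hypothesis that $\mathcal{L}_\Sigma(p_i)$ is algebraic over $k(u^{(\infty)}, p_1, \ldots, p_s)$ for each $i$. Write out $\mathcal{L}_\Sigma(p_i) = \sum_{j=1}^n f_j \frac{\partial p_i}{\partial x_j} + D_u(p_i)$; since $p_i \in k(\mathbf{x})[u^{(\infty)}]$, collecting by monomials in $u^{(\infty)}$ shows that each coefficient of $\mathcal{L}_\Sigma(p_i)$, as a polynomial in $u^{(\infty)}$, lies in the field generated over $k$ by $\mathcal{C}$, the $f_j$'s evaluated appropriately, and the $\mathbf{x}$-derivatives of the coefficients — all of which live in $k(\mathbf{x})$-related data but, crucially, the algebraicity hypothesis says these new coefficients are constrained. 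Here is the cleaner route: pass to the differential field $k(\mathbf{x}, u^{(\infty)})$ with derivation $\mathcal{L}_\Sigma$ (by Remark~\ref{Liefacts} this is a differential field). In this field, the set $\{p_1, \ldots, p_s\}$ generates a differential subfield $k(u^{(\infty)})\langle p_1, \ldots, p_s\rangle$ whose transcendence degree over $k(u^{(\infty)})$ is exactly $s$ (it is generated as a field by $p_1, \ldots, p_s$ together with all their $\mathcal{L}_\Sigma$-derivatives, but those derivatives are algebraic over $k(u^{(\infty)}, p_1, \ldots, p_s)$ by hypothesis). Now $\mathcal{C} \subseteq k(\mathbf{x})$, and I claim $k(\mathcal{C}) \subseteq$ (algebraic closure of) the constants-augmented version of this subfield; more precisely, each $c \in \mathcal{C}$ is obtainable from $p_1, \ldots, p_s$ by a sequence of $\mathcal{L}_\Sigma$-differentiations followed by extracting coefficients of $u^{(\infty)}$-monomials via the operators $\frac{\partial}{\partial u^{(j)}}$ and evaluation at $u^{(\infty)} = 0$, since a polynomial in $u^{(\infty)}$ is determined by its coefficients and those coefficients are recovered by such partial derivatives at $0$.

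The main obstacle — and the step I would spend the most care on — is making the last claim rigorous: that the coefficients $\mathcal{C}$ can be extracted from $p_1, \ldots, p_s$ within a field extension of bounded transcendence degree over $k$. The subtlety is that the operators $D_u$ and $\partial/\partial u^{(j)}$ do not obviously preserve the relevant subfield, so I would instead reason at the level of transcendence degree directly: let $F = k(u^{(\infty)}, p_1, \ldots, p_s)^{\mathrm{alg}} \cap k(\mathbf{x}, u^{(\infty)})$ be the relative algebraic closure inside the big field. By hypothesis $F$ is closed under $\mathcal{L}_\Sigma$. Since $p_i \in k(\mathbf{x})[u^{(\infty)}]$ and $F$ contains $u^{(\infty)}$ and is a field, and since $\frac{\partial p_i}{\partial u^{(j)}}$ can be written (using $D_u$ and $\mathcal{L}_\Sigma$ and the $f_j$, but one must check $f_j \in F$ — which follows because $f_j = \mathcal{L}_\Sigma(x_j)$ and... no, $x_j \notin F$ a priori) — this is exactly where the argument is delicate. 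The honest fix is: consider the $k(u^{(\infty)})$-algebra generated by all $\mathcal{L}_\Sigma^m(p_i)$; all these lie in $k(\mathbf{x})[u^{(\infty)}]$ and are algebraic over $k(u^{(\infty)}, p_1, \ldots, p_s)$ (by induction using that $F$ is $\mathcal{L}_\Sigma$-closed), hence the $k(u^{(\infty)})$-subalgebra they generate has fraction field of transcendence degree $\leqslant s$ over $k(u^{(\infty)})$; finally every element of $\mathcal{C}$ is, up to a nonzero element of $k$, a coefficient of some $u^{(\infty)}$-monomial in some $p_i$, and one shows by a specialization/evaluation argument (sending $u^{(j)} \mapsto$ generic constants in $k$, using that $k$ is algebraically closed and infinite) that these coefficients all lie in a field of transcendence degree $\leqslant s$ over $k$. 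Putting the two inequalities together yields $\operatorname{trdeg}_k k(\mathcal{C}) = s$.
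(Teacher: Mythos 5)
Your first paragraph correctly disposes of the easy direction: since $p_1,\ldots,p_s$ lie in $k(\mathcal{C})(u^{(\infty)})$ and are algebraically independent over $k(u^{(\infty)})$, one gets $s \leqslant \operatorname{trdeg}_k k(\mathcal{C})$ (despite the label ``upper bound'', this is the lower bound, but the argument is fine). The genuine content of the lemma is the reverse inequality $\operatorname{trdeg}_k k(\mathcal{C}) \leqslant s$, and there your proposal stops exactly where the real work begins. The step you defer --- passing from ``$p_1,\ldots,p_s$ and all their $\mathcal{L}_\Sigma$-derivatives are algebraic over $F = k(u^{(\infty)},p_1,\ldots,p_s)$'' to a bound on the transcendence degree of the \emph{coefficients} --- cannot be carried out by transcendence-degree bookkeeping plus a generic evaluation $u^{(j)} \mapsto \lambda_j \in k$. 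Purely field-theoretically, a collection of elements of $k(\mathbf{x})[u^{(\infty)}]$ generating a field of transcendence degree $s$ over $k(u^{(\infty)})$ does not control its coefficients: $x_1 + x_2u$ generates a field of transcendence degree $1$ over $k(u^{(\infty)})$ while its coefficients have transcendence degree $2$ over $k$. So any correct argument must exploit the specific form of $\mathcal{L}_\Sigma$ (that it is $D_u$ plus a vector field in $\mathbf{x}$ polynomial in $u$), not merely the algebraicity of the Lie derivatives; moreover, evaluating at one generic point only bounds the $s$ evaluated values, and recovering all of $\mathcal{C}$ needs evaluations at many points whose joint transcendence degree you have not controlled. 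This is precisely the obstacle you flag yourself, and the phrase ``one shows by a specialization/evaluation argument'' leaves it unresolved, so the proposal does not prove the lemma.

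For comparison, the paper closes this gap by proving, by induction on the number of monomials, that every $p \in k(\mathbf{x})[u^{(\infty)}]$ algebraic over $F$ has all its coefficients algebraic over $F$. The key device is a high-order Lie derivative: with $h = \ord_u p$, $H = \max_{1\leqslant i\leqslant s} \ord_u p_i$ and $h_0 = \max(1, H-h+1)$ (here the assumption that the right-hand side of $\Sigma$ is polynomial in $u$ is used), one has $\mathcal{L}_\Sigma^{h_0}(p) = \frac{\partial p}{\partial u^{(h)}}\, u^{(h+h_0)} + Q$ with $\ord_u Q < h + h_0$, and $u^{(h+h_0)}$ occurs in none of $p_1,\ldots,p_s$, $Q$, $\frac{\partial p}{\partial u^{(h)}}$. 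Since $\overline{F}$ is a differential field for $\mathcal{L}_\Sigma$ (a point you did note), $\mathcal{L}_\Sigma^{h_0}(p)$ is algebraic over $F$; substituting the displayed expression into an irreducible relation over $k[u^{(\infty)}]$ and taking the leading coefficient with respect to $u^{(h+h_0)}$ gives an algebraic relation for $\frac{\partial p}{\partial u^{(h)}}$ over $F$, which has fewer monomials, and subtracting the recovered monomials lets the induction proceed. Applying this to $p_1,\ldots,p_s$ themselves makes every element of $\mathcal{C}$ algebraic over $F$, hence $\operatorname{trdeg}_k k(\mathcal{C}) = \operatorname{trdeg}_{k(u^{(\infty)})} k(u^{(\infty)},\mathcal{C}) \leqslant s$. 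Some argument of this kind --- one that genuinely uses the action of $\mathcal{L}_\Sigma$ to peel off coefficients --- is what is missing from your proposal.
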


\begin{proof}
  We will first prove the following statement: for every $p \in k(\mathbf{x})[u^{(\infty)}]$ algebraic over $F := k(u^{(\infty)}, p_1, \ldots, p_s)$, its coefficients as a polynomial in $u^{(\infty)}$ are also algebraic over $F$.
  We will prove this by induction on the number of monomials in $p$. 
  For a single monomial, the statement is true.
  Assume that there is more than one monomial. 
  Let $h = \ord_u p$. 
  By dividing by $u^{(h)}$ if necessary, we may assume that $u^{(h)} \not\mid p$, so $\frac{\partial p}{\partial u^{(h)}}$ has fewer monomials than $p$.
  Let $H := \max\limits_{1 \leqslant i \leqslant s}\ord_u p_i$.
  Let $h_0 := \max(1, H - h + 1)$ and $P := \mathcal{L}_{\Sigma}^{h_0} (p)$.
  The algebraic closure of $F$ is a differential field with respect to $\mathcal{L}_{\Sigma}$. 
  Thus, $P$, being the~$h_0$-th Lie derivative of~$p \in \overline{F}$, is also algebraic over $F$.
  We can write 
  \begin{equation}\label{eq:P_expansion}
    P = \frac{\partial p}{\partial u^{(h)}} u^{(h + h_0)} + Q, \quad \text{where}\quad Q \in k(\mathbf{x})[u^{(< h + h_0)}].
  \end{equation}
  Let $R \in k[u^{(\infty)}][X_1, \ldots, X_s, Y]$ be an irreducible polynomial such that $R(p_1, \ldots, p_s, P) = 0$.
  We plug the representation of $P$ by~\eqref{eq:P_expansion} into this equality and consider the result as polynomial in $u^{(h + h_0)}$.
  Since neither of $p_1, \ldots, p_s, Q, \frac{\partial p}{\partial u^{(h)}}$ involves $u^{(h + h_0)}$, every coefficient of this polynomial must vanish.
  The leading coefficient only involves $p_1, \ldots, p_s, \frac{\partial p}{\partial u^{(h)}}$ and thus yields an algebraic dependence of $\frac{\partial p}{\partial u^{(h)}}$ over $F$.
  Since $\frac{\partial p}{\partial u^{(h)}}$ has fewer monomials than $p$, all its monomials are algebraic over $F$. 
  By subtracting corresponding monomials from $p$, we obtain a polynomial with fewer monomials, so the induction hypothesis implies that the remaining coefficients of $p$ are also algebraic over $F$.
  The statement is proved.
  
  In order to prove the lemma, we apply the statement above to $p_1, \ldots, p_s$ and deduce that each element of $\mathcal{C}$ is algebraic over $F$.
  Therefore
  \[
    s \geqslant\operatorname{trdeg}_{k(u^{(\infty)})} k(\mathcal{C}) = \operatorname{trdeg}_{k} k(\mathcal{C}).
  \]
  On the other hand, $p_1, \ldots, p_s$ are algebraic over $k(u^{(\infty)}, \mathcal{C})$, so $\operatorname{trdeg}_{k(u^{(\infty)})} k(\mathcal{C}) \geqslant s$.
\end{proof}

\begin{corollary}\label{cor:alg_dep}
  Let $\Sigma$ be a system of the form~\eqref{eq:ODEmodel}.
  Let $p_1, \ldots, p_s \in k(\mathbf{x}, u)[(u')^{(\infty)}]$ be algebraically independent over $k(u^{(\infty)})$ such that, for every $1 \leqslant i \leqslant s$, $\mathcal{L}_{\Sigma}(p_i)$ is algebraic over $k(u^{(\infty)}, p_1, \ldots, p_s)$.
  
  Let $\mathcal{C}$ be the set of the coefficients of $p_1, \ldots, p_s$ considered as polynomials in $u', u'', \ldots$.
  Then $\operatorname{trdeg}_{k(u)} k(u, \mathcal{C}) = s$.
\end{corollary}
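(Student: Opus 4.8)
The plan is to reduce Corollary~\ref{cor:alg_dep} to Lemma~\ref{lem:alg_dep} by promoting the input $u$ to a state variable. Introduce a fresh input symbol $v$ and let $\widetilde\Sigma$ be the system on the state space with coordinates $\widetilde{\mathbf{x}} := (\mathbf{x}, u)$ and input $v$, given by $x_i' = f_i(\mathbf{x}, u)$ for $1 \leqslant i \leqslant n$ together with $u' = v$ (the output may be chosen arbitrarily, since it plays no role in Lemma~\ref{lem:alg_dep}). Its right-hand side is polynomial --- in fact affine --- in the new input $v$, so $\widetilde\Sigma$ has the shape required by Lemma~\ref{lem:alg_dep}. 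The key point is that $\mathcal{L}_{\widetilde\Sigma}$ and $\mathcal{L}_{\Sigma}$ coincide as derivations of $k(\mathbf{x}, u^{(\infty)}) = k(\widetilde{\mathbf{x}}, v^{(\infty)})$: writing $v^{(j)} = u^{(j+1)}$, the operator $D_v$ recovers all terms of $D_u$ except the $j = 0$ summand $u'\frac{\partial}{\partial u}$, and this missing summand is precisely the term $v\frac{\partial}{\partial u}$ contributed by the new state equation $u' = v$.

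Next I would apply Lemma~\ref{lem:alg_dep} to $\widetilde\Sigma$ and the enlarged family $\widetilde p_0 := u$, $\widetilde p_i := p_i$ for $1 \leqslant i \leqslant s$, all lying in $k(\widetilde{\mathbf{x}})[v^{(\infty)}]$ (note that $(u')^{(\infty)} = v^{(\infty)}$ and $k(\mathbf{x},u) = k(\widetilde{\mathbf{x}})$). Since $p_1, \ldots, p_s$ are algebraically independent over $k(u^{(\infty)}) = k(v^{(\infty)})(u)$ and $u$ is transcendental over $k(v^{(\infty)})$, the family $\widetilde p_0, \ldots, \widetilde p_s$ is algebraically independent over $k(v^{(\infty)})$. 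Moreover $\mathcal{L}_{\widetilde\Sigma}(\widetilde p_0) = v \in k(v^{(\infty)})$, while for $1 \leqslant i \leqslant s$ we have $\mathcal{L}_{\widetilde\Sigma}(\widetilde p_i) = \mathcal{L}_{\Sigma}(p_i)$, which by hypothesis is algebraic over $k(u^{(\infty)}, p_1, \ldots, p_s) = k(v^{(\infty)}, \widetilde p_0, \ldots, \widetilde p_s)$. Hence all hypotheses of Lemma~\ref{lem:alg_dep} hold, and it yields $\operatorname{trdeg}_{k} k(\widetilde{\mathcal C}) = s + 1$, where $\widetilde{\mathcal C}$ is the set of coefficients of $\widetilde p_0, \ldots, \widetilde p_s$ regarded as polynomials in $v^{(\infty)} = (u', u'', \ldots)$.

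It remains to unwind this. The polynomial $\widetilde p_0 = u$ contributes only the single coefficient $u$, so $k(\widetilde{\mathcal C}) = k(u, \mathcal C)$ with $\mathcal C$ the coefficient set from the statement, whence $\operatorname{trdeg}_{k} k(u, \mathcal C) = s + 1$. Since $\operatorname{trdeg}_k k(u) = 1$, additivity of transcendence degree in towers gives $\operatorname{trdeg}_{k(u)} k(u, \mathcal C) = (s + 1) - 1 = s$, as claimed.

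The only step requiring real care is the identity $\mathcal{L}_{\widetilde\Sigma} = \mathcal{L}_{\Sigma}$ together with the reindexing between $u$- and $v$-derivatives; the rest is transcendence-degree bookkeeping. One should also not forget to include $u$ itself in the family of polynomials passed to Lemma~\ref{lem:alg_dep} --- otherwise the Lie-derivative closure hypothesis would only be available over a field containing the extra variable $u$. This enlargement is harmless: it raises the target transcendence degree by exactly $1$, which is exactly what is removed when the base field is enlarged from $k$ to $k(u)$.
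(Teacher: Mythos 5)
Your proof is correct and is essentially the paper's own argument: the paper likewise promotes $u$ to a state variable $x_0$ with new input $v$ and equation $x_0' = v$, applies Lemma~\ref{lem:alg_dep} to $p_1,\ldots,p_s,x_0$ to get $\operatorname{trdeg}_k k(u,\mathcal{C}) = s+1$, and concludes $\operatorname{trdeg}_{k(u)} k(u,\mathcal{C}) = s$. You merely spell out the details the paper leaves implicit (the identity $\mathcal{L}_{\widetilde\Sigma}=\mathcal{L}_\Sigma$, the polynomiality in the new input, and the independence of the enlarged family), all of which check out.
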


\begin{proof}
    We will modify $\Sigma$ by considering $u$ as a state variable $x_0$ and $u'$ an input $v$ and adding an equation $x_0' = v$.
    Applying Lemma~\ref{lem:alg_dep} to $p_1, \ldots, p_s, x_0$, we show that $\operatorname{trdeg}_k k(u, \mathcal{C}) = s + 1$, so $\operatorname{trdeg}_{k(u)} k(u, \mathcal{C}) = s$.
\end{proof}

\begin{proof}[Proof of Theorem~\ref{thm:sussmann}]
  Assume that $P$ is realizable by system $\Sigma$ as in~\eqref{eq:ODEmodel} of dimension $n$.
  For every $i \geqslant 0$, we define $p_i := \mathcal{L}_\Sigma^i(g)$ and observe that $p_h$ is algebraic over $k(u^{(\infty)}, p_0, \ldots, p_{h - 1})$.
  We denote the map $\mathbb{A}_{\mathbb{U}}^n \dashrightarrow \mathcal{H}_P$ given by Lemma~\ref{lem:linear_system} by $\bm{\gamma}$ (note that~$\bm{\gamma} = (p_0,\ldots,p_h)$) and the matrix of the system~\eqref{eq:linear_system_main} by $J_p$ (since it is the Jacobian of $p_i$'s with respect to $\mathbf{x}$).
  
  By renumbering $x_1, \ldots, x_n$ if necessary, we will assume that the minor of $J_{p}$ formed by the first $h$ columns is nonsingular.
  Let $\varphi\colon \mathbb{A}_{\mathbb{U}}^h \to \mathbb{A}_{\mathbb{U}}^n$ be a map such that $\varphi^*(x_i) = x_i$ for every $1 \leqslant i \leqslant h$ and $\varphi^*(x_i) \in \mathbb{Z}$ for $i > h$, and these integers are chosen in such a way so that $\varphi^*(J_p)$ is well-defined and the minor formed by the first $h$ columns of $\varphi^*(J_p)$ is nonsingular.
  We set $\widetilde{\bm{\gamma}} := \bm{\gamma} \circ \varphi$.
  By construction of $\varphi$, the Jacobian $\widetilde{J}_p$ of $\varphi^*(p_0), \ldots, \varphi^*(p_{h - 1})$ has rank $h$.
  Thus, by \cite[Theorem~2.2]{Ehrenborg1993}\footnote{The theorem is stated over $\mathbb{C}$ but the proof works for every field of zero characteristic}, they are algebraically independent over~$\mathbb{U}$.
  Then $\widetilde{\bm{\gamma}}$ is dominant.
  
  \textbf{Rational realizations.}
  For every $1 \leqslant i < h$, let $m_{i, 1}, \ldots, m_{i, N_i}$ be the list of monomials of $p_i$ as a polynomial in $u',  u'', \ldots$, and let $c_{i, 1}, \ldots, c_{i, N_i}$ be the corresponding list of coefficients.
  We denote $\mathcal{C} := (c_{i, j} \mid 0 \leqslant i < h, \; 1 \leqslant j \leqslant N_i)$.
  Then Corollary~\ref{cor:alg_dep} implies $\operatorname{trdeg}_{k(u)}k(u, \mathcal{C}) = h$.
  We will factor matrix $J_p$ as follows.
  Set $N := N_0 + \ldots + N_{h - 1}$.
  We define an $h \times N$-matrix $U$ such the $i$-th row is of the form
  \[
  (\underbrace{0, \ldots, 0}_{N_0 + \ldots + N_{i - 2}\text{ zeroes}}, m_{i - 1, 1}, \ldots, m_{i - 1, N_{i - 1}}, \underbrace{0, \ldots, 0}_{N_{i} + \ldots + N_{h - 1}\text{ zeroes}}).
  \]
  Then $J_p = U\cdot J_{\mathcal{C}}$, where $J_{\mathcal{C}}$ is the Jacobian of $\mathcal{C}$.
  For every $0 \leqslant i < h$, the monomials of $p_{i + 1} - D_u(p_i)$ are among $m_{i, 1}, \ldots, m_{i, N_i}$.
  Therefore, there exists $b \in (k(\mathbf{x}, u))^N$ such that the left-hand side of~\eqref{eq:linear_system_main} can be written as $U\cdot b$.
  Since the elemens of each row of $U$ are distinct monomials in $u', u'', \ldots$, the right kernel of $U$ over $k(\mathbf{x}, u)$ is zero.
  Hence, for every $v \in (k(\mathbf{x}, u))^n$, we have
  \begin{equation}\label{eq:cancel_u}
    U \cdot b = U \cdot J_{\mathcal{C}}\cdot v \iff b = J_{\mathcal{C}}\cdot v
  \end{equation}
  The system~\eqref{eq:linear_system_main} has a solution over $k(\mathbf{x}, u)$ due to the realizability of $P$.
  Then, by~\eqref{eq:cancel_u}, the same holds for 
  \begin{equation}\label{eq:expanded_system}
    b = J_{\mathcal{C}} \cdot \begin{pmatrix}
      Z_1 & \ldots & Z_n
    \end{pmatrix}^T,
  \end{equation}
  Since $\operatorname{trdeg}_{k(u)} k(\mathcal{C}, u) = h$, we have $\operatorname{rank} J_\mathcal{C} = h$ by \cite[Theorem~2.2]{Ehrenborg1993}.
  Then, due to our choise of ordering on $x$'s, $b$ belongs to the $k(\mathbf{x}, u)$-span of the first $h$ columns of $J_{\mathcal{C}}$.
  Then the same is true over $k(x_1, \ldots, x_h, u)$ for $\varphi^\ast(b)$ and $\widetilde{J}_{\mathcal{C}}$ which is formed by the first $h$ columns of $\varphi^\ast(J_{\mathcal{C}})$.
  Therefore, the system 
  \[
    U\cdot \varphi^\ast(b) = U \cdot \widetilde{J}_{\mathcal{C}} \cdot \begin{pmatrix}
      Y_1 & \ldots & Y_h
    \end{pmatrix}^T = \widetilde{J}_p \cdot \begin{pmatrix}
      Y_1 & \ldots & Y_h
    \end{pmatrix}^T
  \]
  has a solution in $k(x_1, \ldots, x_h, u)$.
  Thus, by Lemma~\ref{lem:linear_system}, $P$ has a realization of dimension $h$.

  \textbf{Input-affine rational realizations.}
  Consider the linear system~\eqref{eq:linear_system_main} provided by Lemma~\ref{lem:linear_system}.
  We will decompose each $Z_i$ as $Z_{i, 0} + Z_{i, 1}u$ and rewrite the system~\eqref{eq:linear_system_main} as a linear system with matrix $\begin{pmatrix}
    J_p & uJ_p
  \end{pmatrix}$ in variables $Z_{1, 0}, \ldots, Z_{n, 0}, Z_{1, 1}, \ldots, Z_{n, 1}$.
  Each solution of the new system in $k(\mathbf{x})$ gives rise to a solution of~\eqref{eq:linear_system_main} in $k(\mathbf{x}) + k(\mathbf{x})u$ and vice versa.
  For this new system we repeat the construction used to obtain~\eqref{eq:expanded_system} but considering monomials in $u^{(\infty)}$, not in $(u')^{(\infty)}$.
  We will obtain the following linear system over $k(\mathbf{x})$:
    \begin{equation}\label{eq:expanded_system2}
      b = J \cdot \begin{pmatrix}
        Z_{1, 0} & \ldots & Z_{n, 0} & Z_{1, 1} & \ldots & Z_{n, 1}
      \end{pmatrix}^T,
  \end{equation}
  Let $\mathcal{C}$ be again the list of coefficients of $p_0, \ldots, p_{h - 1}$.
  Since $up_0, \ldots, up_{h - 1}$ have the same coefficients but in front of different monomials, we can write $J = \begin{pmatrix}
    J_0 & J_1
  \end{pmatrix}$, where the rows of each of $J_0$ and $J_1$ are the rows of the Jacobian $J_{\mathcal{C}}$ and zero rows.
  Since $\operatorname{trdeg}_k k(\mathcal{C}) = h$, the dimension of the column space of each of $J_0$ and $J_1$ is equal to $h$.
  As in the rational case, this implies that the following system has a solution over $k(x_1, \ldots, x_h)$
  \[
    \varphi^*(b) = \begin{pmatrix}\widetilde{J}_0 & \widetilde{J}_1 \end{pmatrix}\begin{pmatrix}
        Y_{1, 0} & \ldots & Y_{h, 0} & Y_{1, 1} & \ldots & Y_{h, 1}
      \end{pmatrix}^T,
  \]
  where $\widetilde{J}_0$ and $\widetilde{J}_1$ are formed by the first $h$ columns of $\varphi^*(J_0)$ and $\varphi^*(J_1)$, respectively.
  This solution yields a solution of the corresponding system~\eqref{eq:linear_system_main} for $\widetilde{\bm{\gamma}}$ in $k(x_1, \ldots, x_h) + k(x_1, \ldots, x_h)u$.
  Hence Lemma~\ref{lem:linear_system} implies that $P$ has a realization of order $h$.

\end{proof}



\subsection{Realizability criteria for $\ord_u P \leqslant 1$}

\begin{proposition}\label{prop:order_zero}
  Let $P(y, u) \in k[y^{(\infty)}, u^{(\infty)}]$ be with $\ord_y P = h$ and $\ord_u P = 0$.
  Then there exists a rational (resp., input-affine rational) realization of $P$ if and only if there exists a dominant map $\bm{\gamma} \colon \mathbb{A}_{\mathbb{U}}^h \dashrightarrow \mathcal{H}_P$ such that $\gamma_0, \ldots, \gamma_{h - 1} \in k(\mathbf{x})$ and $\gamma_h \in k(\mathbf{x}, u)$ (resp., $\gamma_h \in k(\mathbf{x}) + k(\mathbf{x}) u$), where $\mathbf{x}$ are the coordinates in $\mathbb{A}^h_{\mathbb{U}}$.
\end{proposition}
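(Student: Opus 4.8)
The plan is to obtain the statement by combining the general criterion of Lemma~\ref{lem:linear_system} with the minimal-order theorem (Theorem~\ref{thm:sussmann}); the genuinely new ingredient will be an argument that, under the hypothesis $\ord_u P = 0$, the derivatives $u', u'', \ldots$ can be removed from the first $h$ components of the parametrizing map.

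\textbf{The ``if'' direction.} Here I would simply feed the given $\bm{\gamma}$ into the reverse implication of Lemma~\ref{lem:linear_system} with $n = h$. Because $\gamma_0, \ldots, \gamma_{h-1} \in k(\mathbf{x})$ involve no $u^{(j)}$, each $D_u(\gamma_i)$ with $i \leqslant h-1$ vanishes, so the left-hand side of~\eqref{eq:linear_system_main} is just $(\gamma_1, \ldots, \gamma_h)^T$ and its coefficient matrix is the \emph{square} $h \times h$ Jacobian $J_p$ of $\gamma_0, \ldots, \gamma_{h-1}$ in $x_1, \ldots, x_h$, with entries in $k(\mathbf{x})$. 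Dominance of $\bm{\gamma}$ makes $\gamma_0, \ldots, \gamma_{h-1}$ algebraically independent over $\mathbb{U}$, hence over $k$, so $J_p$ is invertible over $k(\mathbf{x})$ by~\cite[Theorem~2.2]{Ehrenborg1993}; then the unique solution $J_p^{-1}(\gamma_1, \ldots, \gamma_h)^T$ lies in $k(\mathbf{x}, u)^h$, and in $\bigl(k(\mathbf{x}) + k(\mathbf{x})u\bigr)^h$ when $\gamma_h = a + bu$ with $a, b \in k(\mathbf{x})$ (split it as $J_p^{-1}(\gamma_1, \ldots, \gamma_{h-1}, a)^T + u\,J_p^{-1}(0, \ldots, 0, b)^T$). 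Since $\gamma_0 \in k(\mathbf{x})$ sits inside both ambient rings, Lemma~\ref{lem:linear_system} delivers the required (input-affine) rational realization of dimension $h$.

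\textbf{The ``only if'' direction.} Here I would first use Theorem~\ref{thm:sussmann} to pass to a realization $\Sigma$ of dimension exactly $h$ (input-affine if we began with one), set $\gamma_i := \mathcal{L}_\Sigma^i(g)$, and recall from the proof of Lemma~\ref{lem:linear_system} that $\bm{\gamma} = (\gamma_0, \ldots, \gamma_h)$ is dominant onto $\mathcal{H}_P$ with $P(\gamma_0, \ldots, \gamma_h, u) \equiv 0$ in $k(\mathbf{x}, u^{(\infty)})$; a straightforward induction (using $f_k \in k(\mathbf{x}, u)$) also gives $\gamma_i \in k(\mathbf{x}, u^{(\leqslant i)})$. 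The core of the proof is then an induction on $j = 0, 1, \ldots, h-1$ showing $\gamma_j \in k(\mathbf{x})$: granting $\gamma_0, \ldots, \gamma_{j-1} \in k(\mathbf{x})$ one first gets $\gamma_j = \sum_k f_k\,\partial\gamma_{j-1}/\partial x_k \in k(\mathbf{x}, u)$ (for $j = 0$ this is merely $\gamma_0 = g$), puts $c := \partial\gamma_j/\partial u \in k(\mathbf{x}, u)$, and then tracks the highest derivative of $u$ along $\gamma_{i+1} = \mathcal{L}_\Sigma(\gamma_i)$ to see that for $t \geqslant 1$ the function $\gamma_{j+t}$ lies in $k(\mathbf{x}, u^{(\leqslant t)})$ and contains $u^{(t)}$ only linearly, with coefficient $c$. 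In particular $\gamma_h$ contains the summand $u^{(h-j)}c$, whereas $\gamma_0, \ldots, \gamma_{h-1}$ all lie in $k(\mathbf{x}, u^{(\leqslant h-1-j)})$ and so do not involve $u^{(h-j)}$. Reading $P(\gamma_0, \ldots, \gamma_h, u) = 0$ as a polynomial in the transcendental $u^{(h-j)}$ over $k(\mathbf{x}, u^{(\leqslant h-1-j)})$, its leading coefficient equals $\ell(\gamma_0, \ldots, \gamma_{h-1}, u)\,c^{\,d}$, where $\ell \neq 0$ is the leading coefficient of $P$ with respect to $y^{(h)}$ and $d = \deg_{y^{(h)}} P \geqslant 1$; algebraic independence of $\gamma_0, \ldots, \gamma_{h-1}$ over $\mathbb{U}$ makes $\gamma_0, \ldots, \gamma_{h-1}, u$ algebraically independent over $k$, so $\ell(\gamma_0, \ldots, \gamma_{h-1}, u) \neq 0$ and hence $c = 0$, i.e.\ $\gamma_j \in k(\mathbf{x})$. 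Once $\gamma_0, \ldots, \gamma_{h-1} \in k(\mathbf{x})$, the equality $\gamma_h = \sum_k f_k\,\partial\gamma_{h-1}/\partial x_k$ shows $\gamma_h \in k(\mathbf{x}, u)$, and $\gamma_h \in k(\mathbf{x}) + k(\mathbf{x})u$ when $\Sigma$ is input-affine.

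The step I expect to be the real obstacle is precisely this last induction: one must keep exact account of which $u^{(m)}$ can occur in each $\gamma_i$ so as to isolate a single power of $u^{(h-j)}$ whose coefficient carries the factor $c = \partial\gamma_j/\partial u$, and one must guarantee that the matching coefficient of $P$ does not vanish on the image of $\bm{\gamma}$ --- which is where the hypothesis $\ord_u P = 0$ (keeping $P$ free of $u', u'', \ldots$) and the algebraic independence forced by dominance and irreducibility of $P$ are both essential.
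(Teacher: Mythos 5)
Your proposal is correct and follows essentially the same route as the paper: the ``if'' direction is the identical application of Lemma~\ref{lem:linear_system} with the nonsingular $h\times h$ Jacobian (Ehrenborg's criterion plus Cramer-type splitting for the input-affine case), and the ``only if'' direction likewise reduces to dimension $h$ via Theorem~\ref{thm:sussmann} and then forces the first $h$ Lie derivatives to be free of $u$ by tracking the top derivative of $u$ against $\ord_u P = 0$. Your induction with the explicit linear coefficient $c = \partial\gamma_j/\partial u$ and the leading-coefficient evaluation of $P$ is just a more detailed rendering of the paper's argument that $\mathcal{L}_\Sigma$ raises $\ord_u$ by one, so the highest derivative $u^{(h-h_0)}$ would have to appear in $P$.
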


\begin{proof}
  Assume that such a parametrization $\bm{\gamma}$ exists.
  We will show that it satisfies the conditions of Lemma~\ref{lem:linear_system}.
  The dominance of $\bm{\gamma}$ and the dependence of $P$ on $y^{(h)}$ imply that $\gamma_0, \ldots, \gamma_{h - 1}$ are algebraically independent over $k(u^{(\infty)})$.
  Thus, by \cite[Theorem~2.2]{Ehrenborg1993}, the matrix of the system~\eqref{eq:linear_system_main} is nonsingular.
  For the case of rational realization, we observe that~\eqref{eq:linear_system_main} is defined over $k(\mathbf{x}, u)$, so the unique solution will also be in this field.
  For the case of rational input-affine realization, we observe that the matrix of the system if defined over $k(\mathbf{x})$ and the entries of the left-hand side are in $k(\mathbf{x}) + k(\mathbf{x})u$.
  Therefore, Kramer's rule implies that the unique solution will also be in $k(\mathbf{x}) + k(\mathbf{x})u$.
  
  Now assume that $P$ is realizable. 
  By Theorem~\ref{thm:sussmann}, it is realizable by a system~$\Sigma$ as in~\eqref{eq:ODEmodel} of dimension $h$.
  Then $P$ vanishes at $g, \mathcal{L}_{\Sigma}(g), \ldots, \mathcal{L}_{\Sigma}^h(g)$.
  For every $R \in k(\mathbf{x}, u^{(\infty)})$ depending on $u$, we have
  $\ord_u \mathcal{L}_{\Sigma} R = \ord_u R + 1$.
  Let $h_0$ be the smallest integer~$0 \leqslant i \leqslant h$ such that $\mathcal{L}_{\Sigma}^{h_0}(g)$ involves $u$.
  Then $\ord_u \mathcal{L}_{\Sigma}^h(g) = h - h_0$.
  Since $u^{(h - h_0)}$ does not occur in $g, \mathcal{L}_{\Sigma}(g), \ldots, \mathcal{L}_{\Sigma}^{h - 1}(g)$, it must occur in $P$.
  Thus, $h = h_0$, so $(g, \mathcal{L}_\Sigma(g), \ldots, \mathcal{L}_\Sigma^h(g))$ yields a desired parametrization.
\end{proof}

\begin{proposition}\label{prop:spec_param_fo}
  Let $P(y, u) \in k[y^{(\infty)}, u^{(\infty)}]$ be with $\ord_y P = h$ and $\ord_u P = 1$.
  Then there exists a rational realization of $P$ if and only if there exists a dominant rational map $\bm{\gamma} \colon \mathbb{A}_{\mathbb{U}}^h \dashrightarrow \mathcal{H}_P$ (with the coordinates in $\mathbb{A}^h_\mathbb{U}$ denoted by $\mathbf{x}$) such that 
  \begin{itemize}
      \item $\gamma_0, \ldots, \gamma_{h - 2} \in k(\mathbf{x})$ and $\gamma_{h - 1} \in k(\mathbf{x}, u)$;
      \item $\gamma_{h} \in k(\mathbf{x}, u, u')$ and  $\frac{\partial \gamma_h}{\partial u'} = \frac{\partial \gamma_{h - 1}}{\partial u}$.
  \end{itemize}
\end{proposition}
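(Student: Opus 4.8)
The plan is to deduce both implications from the general realizability criterion (Lemma~\ref{lem:linear_system}) together with the minimal-dimension theorem (Theorem~\ref{thm:sussmann}), following the pattern of the proof of Proposition~\ref{prop:order_zero} but now keeping track of the variable $u'$ in addition to $u$.

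For the ``if'' direction, I would check that a dominant $\bm{\gamma} = (\gamma_0, \ldots, \gamma_h)$ of the stated shape satisfies the hypotheses of Lemma~\ref{lem:linear_system} with $n = h$. Dominance and $\ord_y P = h$ force $\gamma_0, \ldots, \gamma_{h-1}$ to be algebraically independent over $\mathbb{U}$, so by \cite[Theorem~2.2]{Ehrenborg1993} the $h \times h$ matrix of~\eqref{eq:linear_system_main} is nonsingular. On the right-hand side of~\eqref{eq:linear_system_main}: since $\gamma_0, \ldots, \gamma_{h-2} \in k(\mathbf{x})$ we get $D_u(\gamma_i) = 0$ for $i \leqslant h-2$, so entries $1, \ldots, h-1$ are just $\gamma_1, \ldots, \gamma_{h-1} \in k(\mathbf{x}, u)$, while the last entry $\gamma_h - D_u(\gamma_{h-1}) = \gamma_h - u'\frac{\partial \gamma_{h-1}}{\partial u}$ lies in $k(\mathbf{x}, u)$ precisely because $\frac{\partial \gamma_h}{\partial u'} = \frac{\partial \gamma_{h-1}}{\partial u}$. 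Since the matrix also has entries in $k(\mathbf{x}, u)$, Cramer's rule gives a solution of~\eqref{eq:linear_system_main} in $k(\mathbf{x}, u)$, and Lemma~\ref{lem:linear_system} yields a rational realization (of dimension $h$).

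For the ``only if'' direction, by Theorem~\ref{thm:sussmann} I may assume $P$ is realized by a system $\Sigma$ as in~\eqref{eq:ODEmodel} of dimension $h$; set $\gamma_i := \mathcal{L}_\Sigma^i(g)$, so $\bm{\gamma} = (\gamma_0, \ldots, \gamma_h)$ is the dominant map supplied by Lemma~\ref{lem:linear_system}. The engine of the argument is order propagation: if $R \in k(\mathbf{x}, u^{(\infty)})$ involves $u$ then $\ord_u \mathcal{L}_\Sigma(R) = \ord_u R + 1$, whereas if $R \in k(\mathbf{x})$ then $\mathcal{L}_\Sigma(R) = \sum_i f_i \frac{\partial R}{\partial x_i} \in k(\mathbf{x}, u)$. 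Letting $h_0$ be the least index with $\gamma_{h_0} \notin k(\mathbf{x})$, this shows $\gamma_0, \ldots, \gamma_{h_0 - 1} \in k(\mathbf{x})$, $\ord_u \gamma_{h_0} = 0$, and $\ord_u \gamma_{h_0 + j} = j$ for $j \geqslant 0$. The step I expect to be the crux is the claim that a differential variable occurring in $\gamma_h$ but in none of $\gamma_0, \ldots, \gamma_{h-1}$ must occur in $P$: to prove it I would differentiate the identity $P(\gamma_0, \ldots, \gamma_h) = 0$ with respect to such a variable $u^{(j)}$; since no $\gamma_i$ involves it, the chain rule collapses to $\frac{\partial P}{\partial u^{(j)}}(\gamma_0, \ldots, \gamma_h) = 0$, and because the ideal of relations of $(\gamma_0, \ldots, \gamma_h)$ over $\mathbb{U}$ is generated by $P$ (here $\mathcal{H}_P$ is irreducible over $\mathbb{U}$ --- automatic, since a dominant map from the irreducible $\mathbb{A}^h_\mathbb{U}$ to $\mathcal{H}_P$ exists in either case of the equivalence --- and $P$, being irreducible over $k$, stays squarefree hence irreducible over $\mathbb{U}$), $P$ would divide the nonzero polynomial $\frac{\partial P}{\partial u^{(j)}}$, which is impossible by comparing degrees in $u^{(j)}$. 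Granting the claim: $u'$ occurs in $P$ (as $\ord_u P = 1$), hence in some $\gamma_i$ with $i \leqslant h$, which by $\ord_u \gamma_{h_0 + j} = j$ forces $h_0 \leqslant h - 1$; then $u^{(h - h_0)}$ occurs in $\gamma_h$ and in no earlier $\gamma_i$, so $h - h_0 \leqslant \ord_u P = 1$. Therefore $h_0 = h - 1$, giving exactly $\gamma_0, \ldots, \gamma_{h-2} \in k(\mathbf{x})$, $\gamma_{h-1} \in k(\mathbf{x}, u)$, and $\gamma_h \in k(\mathbf{x}, u, u')$; finally $\gamma_h = \mathcal{L}_\Sigma(\gamma_{h-1}) = \sum_i f_i \frac{\partial \gamma_{h-1}}{\partial x_i} + u'\frac{\partial \gamma_{h-1}}{\partial u}$ with the first summand free of $u'$, whence $\frac{\partial \gamma_h}{\partial u'} = \frac{\partial \gamma_{h-1}}{\partial u}$, and $\bm{\gamma}$ is dominant by Lemma~\ref{lem:linear_system}.

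The main obstacle is the claim just highlighted --- identifying the relation ideal of $(\gamma_0, \ldots, \gamma_h)$ with $\langle P \rangle$ over $\mathbb{U}$ and ruling out $P \mid \frac{\partial P}{\partial u^{(j)}}$; once it is in place, the remaining steps are degree bookkeeping entirely parallel to (and only slightly more delicate than) those in the proof of Proposition~\ref{prop:order_zero}.
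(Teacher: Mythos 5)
Your overall route is the paper's own: the ``if'' direction goes through Lemma~\ref{lem:linear_system} (nonsingularity of the matrix of~\eqref{eq:linear_system_main} from algebraic independence via \cite[Theorem~2.2]{Ehrenborg1993}, Cramer's rule, and the observation that $\gamma_h - D_u(\gamma_{h-1})$ is free of $u'$ thanks to $\frac{\partial \gamma_h}{\partial u'} = \frac{\partial \gamma_{h-1}}{\partial u}$), and the ``only if'' direction goes through Theorem~\ref{thm:sussmann}, order bookkeeping of Lie derivatives, and the identity $\frac{\partial \mathcal{L}_\Sigma^h(g)}{\partial u'} = \frac{\partial \mathcal{L}_\Sigma^{h-1}(g)}{\partial u}$; the paper's proof is simply a terser version deferring to Proposition~\ref{prop:order_zero}. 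However, the step you yourself flag as the crux is garbled as written. Your claim is ``a variable $u^{(j)}$ occurring in $\gamma_h$ but in none of $\gamma_0,\ldots,\gamma_{h-1}$ must occur in $P$'', yet your proof of it assumes ``no $\gamma_i$ involves it''; under the claim's actual hypothesis the chain rule applied to $P(\gamma_0,\ldots,\gamma_h,u,u')=0$ yields $\frac{\partial P}{\partial y^{(h)}}(\bm{\gamma})\cdot\frac{\partial \gamma_h}{\partial u^{(j)}} + \frac{\partial P}{\partial u^{(j)}}(\bm{\gamma}) = 0$, not the asserted collapse. The repair is straightforward: if $u^{(j)}$ were absent from $P$, this identity forces $\frac{\partial P}{\partial y^{(h)}}(\bm{\gamma}) = 0$, hence $P \mid \frac{\partial P}{\partial y^{(h)}}$ in $\mathbb{U}[y,\ldots,y^{(h)}]$, which is impossible since $\frac{\partial P}{\partial y^{(h)}}$ is nonzero of smaller degree in $y^{(h)}$ --- a legitimate degree comparison, unlike the one you propose.

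Separately, the statement you actually invoke first (``$u'$ occurs in $P$, hence in some $\gamma_i$'', needed to conclude $h_0 \leqslant h-1$) is the one your sketch literally argues, but there the final step fails as stated: the divisibility $P \mid \frac{\partial P}{\partial u^{(j)}}$ obtained from the relation ideal is a divisibility in $\mathbb{U}[y,\ldots,y^{(h)}]$, where $u^{(j)}$ is a scalar of the coefficient field, so ``comparing degrees in $u^{(j)}$'' is not yet meaningful. One must first descend the divisibility to $k[u,u'][y,\ldots,y^{(h)}]$ (Gauss's lemma: $P$, being irreducible over $k$ and of positive degree in the $y$-variables, is primitive over $k[u,u']$), after which $\deg_{u'}\frac{\partial P}{\partial u'} < \deg_{u'} P$ gives the contradiction; alternatively, one can argue directly that if no $\gamma_i$ involves $u'$, then $u'$, being transcendental over $k(\mathbf{x},u,u'',u''',\ldots)$, cannot be a root of the nonzero polynomial $T \mapsto P(\gamma_0,\ldots,\gamma_h,u,T)$. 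With these two repairs --- both routine and using tools you already set up --- the remaining bookkeeping ($h_0 = h-1$, the membership statements, the $u'$-partial identity, dominance) is correct and coincides with the paper's argument.
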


\begin{proof}
  Assume that such a parametrization $\bm{\gamma}$ exists.
  We will show that it satisfies the conditions of Lemma~\ref{lem:linear_system}.
  As in the proof of Proposition~\ref{prop:order_zero}, the matrix of~\eqref{eq:linear_system_main} is nonsingular.
  Since $\gamma_h - D_u(\gamma_{h - 1})$ does not involve $u'$, the system~\eqref{eq:linear_system_main} is defined over $k(\mathbf{x}, u)$, so its unique solution belongs to $k(\mathbf{x}, u)$.
  
  Now assume that $P$ is realizable. 
  By Theorem~\ref{thm:sussmann}, it is realizable by a system~$\Sigma$ as in~\eqref{eq:ODEmodel} of dimension $h$.
  Similarly to the proof of Proposition~\ref{prop:order_zero}, one can show that $g, \mathcal{L}_{\Sigma}(g), \ldots, \mathcal{L}_{\Sigma}^{h - 2}(g) \in k(\mathbf{x})$, $\mathcal{L}_{\Sigma}^{h - 1}(g) \in k(\mathbf{x}, u)$, and $\mathcal{L}_{\Sigma}^h(g) \in k(\mathbf{x}, u, u')$.
  Furthermore, the definition of the Lie derivative implies that $\frac{\partial\mathcal{L}_{\Sigma}^h(g)}{\partial u'} = \frac{\partial \mathcal{L}^{h - 1}(g)}{\partial u}$.
  Therefore, $(g, \mathcal{L}_\Sigma(g), \ldots, \mathcal{L}_\Sigma^h(g))$ yields a desired parametrization.
\end{proof}

\section{Order zero in inputs}\label{sec:zero-ord}

In this section, we will consider the case of the input-output equation being
\begin{equation}\label{eq:P_zero_order}
  P(y, y', \ldots, y^{(h)}, u) = 0,
\end{equation}
that is, of zero order with respect to the input.
Proposition~\ref{prop:order_zero} reduces the realization problem for~\eqref{eq:P_zero_order} to finding a rational parametrization of the corresponding surface $\mathcal{H}_P$ of a special form.
Thus, it is sufficient to provide an algorithm for finding such a parametrization.
We show that finding such special parametrization over $k(u)$ can be reduced to finding rational parametrizations of several $h$-dimensional hypersurfaces over $k$ (Algorithm~\ref{alg:special_param}).
In particular, this yields complete algorithms for the cases $h = 1, 2$.
However, the resulting  procedure can be used in practice for $h > 2$ as well, see Examples~\ref{ex:SIR} and~\ref{ex:predator_prey1}.

\begin{algorithm}
\caption{Computing parametrization over $k / k(u)$}\label{alg:special_param}
\begin{description}[itemsep=0pt]
\item[Input ] \begin{itemize}
    \item irreducible polynomial $P \in k(u)[z_0, \ldots, z_h]$ depending nontrivially on $z_h$;
    \item a black-box algorithm for computing rational parametrizations of hypersurfaces over $k$ of dimension $h$.
\end{itemize}
\item[Output ] a parametrization $\bm{\gamma} = (\gamma_0, \ldots, \gamma_h)$ of $P(z_0, \ldots, z_h) = 0$ such that $\gamma_0, \ldots, \gamma_{h - 1}$ are defined over $k$ and $\gamma_h$ is defined over $k(u)$ if such a parametrization exists, and \texttt{NO} otherwise.
\end{description}

\begin{enumerate}[label = \textbf{(S\arabic*)}, leftmargin=*, align=left, labelsep=2pt, itemsep=0pt]
    \item Clear the denominators and further assume that $P \in k[z_0, \ldots, z_h, u]$ is irreducible.
    \item Write $P$ as a polynomial in $z_h$:
    \[
      P = A_d z_h^d + A_{d - 1} z_h^{d - 1} + \ldots + A_0,
    \]
    where $A_0, A_1,\ldots, A_d \in k[z_0, \ldots, z_{h - 1}, u]$.
    \item\label{step:shift} Apply shift $u \to u + c$ for $c \in k$ to ensure that $u \not\mid A_d$.
    \item\label{step:degrees} Let $d_0 := \deg_u A_0$, $d_1 := \deg_u A_d$, and introduce new variables $a_0, \ldots, a_{d_0}$ and $b_1, \ldots, b_{d_1}$.
    \item\label{step:Q} Substitute $z_h$ in $P$ with $\frac{a_0 + a_1 u + \ldots + a_{d_0}u^{d_0}}{1 + b_1 u + \ldots + b_{d_1}u^{d_1}}$. Denote the numerator of the resulting rational function by $Q$. 
    \item Denote the coefficients of $Q$  w.r.t. $u$ by $F_1, \ldots, F_N$.
    \item\label{step:rur} Compute a rational univariate representation~\cite{Rouillier1999} of the zero set of $F_1 = \ldots = F_N = 0$ considered as a polynomial system over $K := k(z_0, \ldots, z_{h - 1})$:
    \[
      q(w) = 0, a_0 = g_0(w), \ldots, b_{d_1} = g_{d_0 + d_1}(w),
    \]
    where $q \in K[T]$ and $g_0, \ldots, g_{d_0 + d_1} \in K(T)$.
    \item For each irreducible (over $K$) factor $r$ of $q$:
    \begin{enumerate}
        \item\label{step:comp_param} Check if there exists a rational parametrization $\bm{\alpha}$ of $r = 0$ in the space with coordinates $(z_0, \ldots, z_{h - 1}, w)$.
        \item If it exists, compute $a_0^\ast, \ldots, a_{d_0}^\ast, b_1^\ast, \ldots, b_{d_1}^\ast$ by evaluating $g_0, \ldots, g_{d_0 + d_1}$ at $\bm{\alpha}$ and \textbf{return}
        \[
        \bm{\gamma} := \left( \alpha_0, \ldots, \alpha_{h - 1}, \frac{a_0^\ast + a_1^\ast u + \ldots + a_{d_0}^\ast u^{d_0}}{1 + b_1^\ast u + \ldots + b_{d_1}^\ast u^{d_1}}\right).
        \]
    \end{enumerate}
    \item \textbf{Return} \texttt{NO}.
\end{enumerate}
\end{algorithm}

\begin{proposition}\label{prop:correctness_one}
  Algorithm~\ref{alg:special_param} is correct.
\end{proposition}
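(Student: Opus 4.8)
The plan is to prove the two implications that make up correctness. \textbf{Soundness:} whenever Algorithm~\ref{alg:special_param} returns $\bm\gamma$, it is a dominant parametrization of $\mathcal H_P$ with $\gamma_0,\ldots,\gamma_{h-1}$ defined over $k$ and $\gamma_h$ over $k(u)$. \textbf{Completeness:} if such a parametrization exists, the algorithm does not reach the final \texttt{return NO}. Throughout I would set $K := k(z_0,\ldots,z_{h-1})$ and, for a candidate parametrization, write $\mathbf x = (x_1,\ldots,x_h)$ for the coordinates on the source $\mathbb A^h$.

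For soundness, suppose the black box returns a parametrization $\bm\alpha = (\alpha_0,\ldots,\alpha_{h-1},\alpha_w)$ of $\{r = 0\}$ for an irreducible (over $K$) factor $r$ of $q$. I would first record that the primitive part of $r$ is irreducible over $k$ (Gauss's lemma, from irreducibility in $K[w]$), so $\{r = 0\}\subseteq\mathbb A^{h+1}$ is an irreducible variety of dimension $h$ whose projection to the $(z_0,\ldots,z_{h-1})$-coordinates is dominant; hence $\alpha_0,\ldots,\alpha_{h-1}$ are algebraically independent over $k$, and — as $u$ is transcendental over $k(\mathbf x)$ — over $k(u)$ as well. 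Then I would trace the substitutions of steps (S5)--(S7): evaluating the RUR maps $g_0,\ldots,g_{d_0+d_1}$ at $\bm\alpha$ yields $a^\ast_\bullet, b^\ast_\bullet\in k(\mathbf x)$ with $F_\ell(\alpha_0,\ldots,\alpha_{h-1},a^\ast_\bullet,b^\ast_\bullet) = 0$ for all $\ell$; since the $F_\ell$ are precisely the $u$-coefficients of the numerator $Q$ produced in (S5), this says $P\bigl(\alpha_0,\ldots,\alpha_{h-1},\tfrac{a^\ast_0 + \cdots + a^\ast_{d_0}u^{d_0}}{1 + b^\ast_1 u + \cdots + b^\ast_{d_1}u^{d_1}}\bigr) = 0$ in $k(\mathbf x)(u)$. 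Hence $\bm\gamma$ lands in $\mathcal H_P$, has the prescribed fields of definition, and is dominant because its first $h$ components are algebraically independent over $k(u)$ while $\mathcal H_P$ is irreducible of dimension $h$.

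For completeness, suppose $\bm\gamma = (\gamma_0,\ldots,\gamma_h)$ parametrizes $\mathcal H_P$ with $\gamma_0,\ldots,\gamma_{h-1}\in k(\mathbf x)$ and $\gamma_h\in k(\mathbf x,u)$; dominance forces $\gamma_0,\ldots,\gamma_{h-1}$ algebraically independent over $k(u)$, hence over $k$. The first key step is to bring $\gamma_h$ into the ansatz shape of step (S5). Writing $\gamma_h = p/q$ with coprime $p,q\in k(\mathbf x)[u]$ and clearing denominators in $P(\gamma_0,\ldots,\gamma_{h-1},\gamma_h) = 0$, reduction modulo $q$ and modulo $p$ respectively gives $q \mid A_d(\gamma_0,\ldots,\gamma_{h-1},u)$ and $p \mid A_0(\gamma_0,\ldots,\gamma_{h-1},u)$ in $k(\mathbf x)[u]$, whence $\deg_u q\le d_1$ and $\deg_u p\le d_0$ (using that substituting the algebraically independent $\gamma_i$ for $z_i$ preserves the $u$-degrees of $A_d$ and $A_0$). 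Here the shift in step (S3) does exactly its job: since $u\not\mid A_d$, we get $A_d(\gamma_0,\ldots,\gamma_{h-1},0)\ne 0$, hence $q(0)\ne 0$, so after scaling we may assume $q(0) = 1$; now $\gamma_h$ has the form of step (S5) with coefficients $a_\bullet, b_\bullet\in k(\mathbf x)$, so $F_\ell(\gamma_0,\ldots,\gamma_{h-1},a_\bullet,b_\bullet) = 0$ for all $\ell$. Via the RUR — whose separating element $w$ is a $k$-linear form in the $a_i$'s, so it specializes into $k(\mathbf x)$ — this solution corresponds to a root $w_\bullet\in k(\mathbf x)$ of some irreducible factor $r$ of $q$ with $z_i$ set to $\gamma_i$. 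Then $\mathbf x\mapsto(\gamma_0,\ldots,\gamma_{h-1},w_\bullet)$ is a dominant rational map $\mathbb A^h\dashrightarrow\{r = 0\}$, so $\{r = 0\}$ is unirational over $k$; the black box succeeds on this factor (or on an earlier one, which by soundness is just as good), and the algorithm returns a parametrization rather than \texttt{NO}.

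I expect the real work to lie in the reduction of an arbitrary $\gamma_h\in k(\mathbf x,u)$ to the bounded ansatz of steps (S4)--(S5): one must extract the divisibilities $q\mid A_d$, $p\mid A_0$ to bound the numerator and denominator degrees by $d_0$ and $d_1$, and then justify the normalization $q(0) = 1$, which is exactly why the shift in (S3) is performed. A second point needing care, used in both directions, is the dictionary between special parametrizations of $\mathcal H_P$ and ordinary rational parametrizations of the auxiliary hypersurfaces $\{r = 0\}$ attached to the irreducible factors of the RUR polynomial $q$; this rests on the observation that such a factor — irreducible over $K = k(z_0,\ldots,z_{h-1})$ and primitive — cuts out an irreducible $h$-dimensional $k$-variety dominating the $(z_0,\ldots,z_{h-1})$-space, which keeps all the algebraic-independence arguments running. (One should also verify that the system $F_1 = \cdots = F_N = 0$ is zero-dimensional over $K$, or adapt the RUR step accordingly.)
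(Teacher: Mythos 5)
Your soundness and completeness arguments coincide with the paper's own proof: the same tracing of steps (S5)--(S7) for soundness (the $F_\ell$ vanish at $(\alpha_0,\ldots,\alpha_{h-1},a^\ast_\bullet,b^\ast_\bullet)$, hence $P$ vanishes at $\bm\gamma$, and dominance follows from the algebraic independence of $\alpha_0,\ldots,\alpha_{h-1}$), and for completeness the same rational-root argument (numerator divides $A_0(\gamma_\bullet,u)$, denominator divides $A_d(\gamma_\bullet,u)$, giving the degree bounds $d_0,d_1$), the same use of the shift (S3) to get a nonzero constant term of the denominator and normalize it to $1$, and the same transfer of the resulting solution through the isomorphism $K\cong k(\gamma_0,\ldots,\gamma_{h-1})$ to a root of $q$, hence to a $k$-defined dominant map $\mathbf x\mapsto(\gamma_0,\ldots,\gamma_{h-1},w_\bullet)$ onto one of its irreducible factors. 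In places (why substituting the algebraically independent $\gamma_i$ preserves the $u$-degrees of $A_0$ and $A_d$, why $q(0)\neq 0$) you are even more explicit than the paper.

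The genuine gap is exactly the item you relegate to your final parenthesis: the zero-dimensionality of $F_1=\cdots=F_N=0$ over $K$. This cannot be deferred: step (S7) is only defined for a zero-dimensional system, and your completeness step ``this solution corresponds to a root $w_\bullet$ of some irreducible factor $r$ of $q$'' relies on the RUR capturing every solution of the system in an extension of $K$, which again presupposes zero-dimensionality. The paper devotes roughly half of its proof to this point: it argues that the coefficients of $\widetilde P:=P(z_0,\ldots,z_{h-1},Z,u)$, with $Z$ the ansatz of (S5), cut out a zero-dimensional set, by computing $\partial\widetilde P/\partial a_i$ and $\partial\widetilde P/\partial b_j$ via the chain rule and checking that, after clearing the common factor $\partial P/\partial Z$ and the denominator, the resulting polynomials in $u$ are linearly independent, i.e., the Jacobian with respect to the $a_i,b_j$ has full rank. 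Be aware that this is a statement at the generic point, where the ansatz numerator and denominator are coprime; specializations where they acquire a common factor are precisely where the rank can drop and where whole families of redundant representations of one and the same rational root of $P$ in $z_h$ can appear, so ``adapt the RUR step accordingly'' is hiding the one place in the correctness proof where substantive work (or a genuine modification of the algorithm) is needed. In short: everything you actually argue is correct and follows the paper's route, but the justification of step (S7) --- the part the paper proves first --- is missing from your proposal.
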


\begin{proof}
  First we will prove that the system $F_1 = \ldots = F_N = 0$ over $K$ (see~\ref{step:rur}) has dimension zero thus justifying that it is possible to compute rational univariate representation at step~\ref{step:rur}.
  Let $Z := \frac{a_0 + a_1 u + \ldots + a_{d_0}u^{d_0}}{1 + b_1 u + \ldots + b_{d_1}u^{d_1}}$.
  Since $\widetilde{P} := P(z_0, \ldots, z_{h - 1}, Z, u)$ and $Q$ from step~\ref{step:Q} differ by a factor $\frac{1}{(1 + b_1 u + \ldots + b_{d_1}u^{d_1})^M}$, the coefficients of $\widetilde{P}$ as a formal power series in $u$ belong to the ideal generated by $F_1, \ldots, F_N$.
  Thus, it is sufficient to prove that the coefficients of $\widetilde{P}$ generate a zero-dimensional ideal over $K$.
  We will prove this by showing that their Jacobian has full rank.
  We compute the partial derivatives of $\widetilde{P}$ w. r. to $a_i$'s and $b_j$'s:
  \begin{align}
  \begin{split}\label{eq:partial_diff}
      \frac{\partial\widetilde{P}}{\partial a_i}  & = \frac{\partial P}{\partial Z} \cdot \frac{u^{i}}{1 + b_1 u + \ldots + b_{d_1}u^{d_1}},\\
      \frac{\partial \widetilde{P}}{\partial b_j} & = \frac{\partial P}{\partial Z} \cdot \frac{-u^{j}}{(1 + b_1 u + \ldots + b_{d_1}u^{d_1})^2}.
  \end{split}
  \end{align}
  The coefficients of the above derivatives as power series in $u$ are the entries of the Jacobian, so it is sufficient to prove linear independence of these rational functions over $K(a_0, \ldots, a_{d_0}, b_1, \ldots, b_{d_1})$.
  Multiplying all the functions~\eqref{eq:partial_diff} by $\frac{(1 + b_1 u + \ldots + b_{d_1}u^{d_1})^2}{\partial P/\partial Z}$, we reduce the problem to verifying linear independence of the following polynomials:
  \begin{align*}
    &1 + b_1 u + \ldots + b_{d_1}u^{d_1}, \ldots, (1 + b_1 u + \ldots + b_{d_1}u^{d_1}) u^{d_0},\\
    &u,\; u^2,\; \ldots,\; u^{d_1},
  \end{align*}
  which is straightforward.
  
  Now we will prove that, if the algorithm returns $\bm{\gamma}$, such $\bm{\gamma}$ is a parametrization satisfying the output specification.
  Since $\gamma_i = \alpha_i$ for $0 \leqslant i < h$, we deduce that  $\gamma_0, \ldots, \gamma_{h - 1}$ are defined over $k$.
  The formula for $\gamma_h$ implies that it is defined over $k(u)$.
  In order to show that $\bm{\gamma}$ is a parametrization of $P = 0$, we observe the fact that $\bm{\alpha}$ is a parametrization of $q$ implies that $F_1, \ldots, F_N$ vanish under the substitution:
  \[
    z_i \to \alpha_i,\quad a_i \to a_i^\ast,\quad b_i \to b_i^\ast.
  \]
  Therefore, $P$ must vanish after substituting each $z_i$ with $\gamma_i$.
  Finally, the dominance of the map $\bm{\gamma}$ follows from the algebraic independence of $\gamma_0 = \alpha_0, \ldots, \gamma_{h - 1} = \alpha_{h - 1}$ due to the definition of $\bm{\alpha}$.
  
  Finally, we will show that if a parametrization $\bm{\gamma}$ satisfying the output specification of the algorithm exists, then such a parametrization will be found by the algorithm.
  We denote the variables used in the parametrization by $x_1, \ldots, x_h$ and write $\gamma_{h} = \frac{C}{D}$, where $C, D \in k(x_1, \ldots, x_h)[u]$ and $\gcd(C, D) = 1$.
  If we consider $P(\gamma_0, \ldots, \gamma_{h - 1}, z_h, u)$ to be a polynomial in a variable $z_h$ over a ring $k(x_1, \ldots, x_h)[u]$, $\gamma_{h}$ will be a rational function root of this polynomial.
  Therefore, its numerator $C$ and denominator $D$ divide the constant and leading terms of the polynomial, respectively.
  Therefore, $\deg_u C \leqslant d_0$ and $\deg_u D \leqslant d_1$.
  Furthermore, thanks to the shift at step~\ref{step:shift}, $D$ must not be divisible by $u$, so its constant term will be non-zero, and can be normalized to be one.
  After such normalization, we see that $\gamma_0, \ldots, \gamma_{h - 1}$ together with the coefficients of $C$ (as $a_i$'s) and $D$ (as $b_i$'s) yield a solution of the system $F_1 = \ldots = F_N = 0$.
  Due to the $k$-algebraic independence of $\gamma_0, \ldots, \gamma_{h - 1}$, this solution, via an isomorphism $K \cong k(\gamma_0, \ldots, \gamma_{h - 1})$, yields a solution of the corresponding zero-dimensional system over $K$ and thus must annihilate $q$ (see step~\ref{step:rur}).
  Therefore, $\gamma_0, \ldots, \gamma_{h - 1}$ together with the linear combination of the coefficients of $C$ and $D$ corresponding to the linear combination of $a_i$'s and $b_i$'s used to form $w$ annihilate $q$ and form a rational parametrization of one of its irreducible factors.
\end{proof}

\begin{remark}[On the input-affine case]
  Thanks to Lemma~\ref{prop:order_zero}, Algorithm~\ref{alg:special_param} can be used to find input-affine rational parametrizations as well.
  The only difference that the ansatz for $z_h$ constructed in~\ref{step:Q} should be taken simply $a_0 + a_1 u$.
\end{remark}


\section{First-order DAE}\label{sec:first-ord}

The goal of this section is to propose algorithms for finding rational and input-affine rational realizations of first-order (both in~$y$ and in~$u$) DAEs.

\subsection{Reminder on rational solutions for DAEs}

In this section we will recall and slightly refine the results from~\cite{Vo2018} about strong rational general solution of first-order DAEs. We start with giving the definition of rational curves. 

\begin{definition}[Rational parametrizations/curves]
Let~$V$ be an irreducible curve in~$\mathbb{A}^n$. 
A rational map $\mathcal{P}: \mathbb{A}^1 \dashrightarrow V$ defined by the set of rational functions~$\mathcal{P}(t) = (\chi_1(t),\ldots,\chi_n(t))$ is called a \emph{rational parametrization} of~$V$ if the following conditions are satisfied:
\begin{enumerate}
    \item $(\chi_1(t_0),\ldots,\chi_n(t_0))\in V$ for all (except for maybe a finite number of values) $t_0 \in k$.
    
    \item For all (except for maybe a finite number of values) points $p \in V$ there exists~$t_0 \in k$ such that~$p=(\chi_1(t_0),\ldots,\chi_n(t_0))$.
\end{enumerate}

A curve~$V$ is called \emph{rational} if it has a rational parametrization.

Any parametrization~$\mathcal{P}(t)$ induces a homomorphism~$\mathcal{P}^*: k(V) \rightarrow k(t)$. If~$\mathcal{P}^*$ is an isomorphism,~$\mathcal{P}(t)$ is called \emph{proper}.
\end{definition}

We will use the following refinement of Algorithm~\ref{alg:special_param}.

\begin{lemma}\label{lem:proper}
  Assume that $h = 1$ and the parametrization computed in step~\ref{step:comp_param} of Algorithm~\ref{alg:special_param} is proper.
  Then the parametrization returned by Algorithm~\ref{alg:special_param} is proper as well.
\end{lemma}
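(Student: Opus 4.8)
The claim is that properness is preserved by Algorithm~\ref{alg:special_param} in the first-order case $h=1$. Recall that with $h=1$ the parametrization has the form $\bm{\gamma} = (\gamma_0, \gamma_1)$ where $\gamma_0 = \alpha_0 \in k(x_1) =: k(t)$ comes from step~\ref{step:comp_param} as a (by assumption) proper parametrization of the irreducible plane curve $r = 0$ in the $(z_0, w)$-coordinates, and $\gamma_1 = \frac{a_0^\ast + a_1^\ast u + \ldots}{1 + b_1^\ast u + \ldots}$ where the $a_i^\ast, b_j^\ast$ are the values of the rational functions $g_0, \ldots, g_{d_0+d_1} \in K(T)$ evaluated at $\bm{\alpha}$, and hence lie in $k(t)$. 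So $\bm{\gamma}$ is a map $\mathbb{A}^1_{\mathbb{U}} \dashrightarrow \mathcal{H}_P$ defined over $k(u)$, and properness means $\bm{\gamma}^\ast \colon \mathbb{U}(\mathcal{H}_P) \to \mathbb{U}(t)$ is an isomorphism, equivalently that $t \in k(u)(\gamma_0, \gamma_1)$ — i.e.\ the single coordinate $t$ can be recovered as a rational function (over $k(u)$) of the two coordinate functions $\gamma_0$ and $\gamma_1$ on the image.

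**Main steps.** First I would reduce properness of $\bm{\gamma}$ to a statement purely about the curve $r = 0$. Since $\bm{\alpha} = (\alpha_0, \beta)$ is a proper parametrization of $r = 0$ (here $\beta$ is the $w$-coordinate of $\bm{\alpha}$), by definition $t \in k(\alpha_0, \beta)$, i.e.\ there is a rational function expressing $t$ in terms of $z_0 = \alpha_0$ and $w = \beta$. So it suffices to show $\beta \in k(u)(\gamma_0, \gamma_1)$: then $t \in k(\alpha_0, \beta) \subseteq k(u)(\gamma_0, \gamma_1)$, giving properness of $\bm{\gamma}$. Now $w = \beta$ was, by construction of the rational univariate representation at step~\ref{step:rur}, the image under $\bm{\alpha}$ of a fixed $K$-linear combination of the $a_i$'s and $b_j$'s (the primitive element used in the RUR), and the $a_i^\ast = g_i(\beta)$, $b_j^\ast = g_{\ldots}(\beta)$ are rational functions of $\beta$ over $K = k(z_0, \ldots, z_{h-1}) = k(z_0)$. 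So $\gamma_1$, as a function of $u$, has coefficients $a_i^\ast, b_j^\ast$ that are rational functions of $z_0 = \gamma_0$ and $\beta$ over $k$.

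**Recovering $\beta$.** The key point is then: from $\gamma_0$ and $\gamma_1 \in k(u)$ (i.e.\ knowing $\gamma_0$ and all the coefficients $a_i^\ast, b_j^\ast$, which are the coefficients of the numerator/denominator of $\gamma_1$ as a reduced rational function in $u$), can one recover $\beta$? Extracting the coefficients $a_i^\ast, b_j^\ast$ from $\gamma_1$ is itself a $k(u)$-rational operation on $\gamma_1$ (one reads off coefficients of $u$ after putting $\gamma_1$ over the normalized denominator with constant term $1$; the normalization is legitimate thanks to the shift in step~\ref{step:shift} which guarantees $u \nmid D$). Then $\beta$ is recovered from $\gamma_0 = z_0$ together with the $a_i^\ast, b_j^\ast$ because the RUR expresses $\beta$ (the primitive element) as a $K$-linear combination $\sum c_i a_i + \sum d_j b_j$, and conversely the $g$'s invert this — so $\beta = \sum c_i a_i^\ast + \sum d_j b_j^\ast$ with $c_i, d_j \in k(z_0)$. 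Hence $\beta \in k(\gamma_0, a_0^\ast, \ldots, b_{d_1}^\ast) \subseteq k(u)(\gamma_0, \gamma_1)$, completing the argument.

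**Expected obstacle.** The main subtlety is making sure the passage "read off the coefficients $a_i^\ast, b_j^\ast$ from $\gamma_1$" is genuinely rational and well-defined, i.e.\ that the numerator and denominator of $\gamma_1$ as produced by the algorithm really are (up to the $u\nmid D$ normalization) the unique coprime representatives, so that the $a_i^\ast$ and $b_j^\ast$ are canonically determined by the function $\gamma_1$ alone and not by the particular presentation. This is exactly where the $\gcd(C,D)=1$ argument and the degree bounds $\deg_u C \le d_0$, $\deg_u D \le d_1$ from the correctness proof of step~\ref{step:rur} (Proposition~\ref{prop:correctness_one}) are needed: they guarantee that the coprime representative of $\gamma_1$ over $k(t)[u]$ has exactly the shape of the ansatz in step~\ref{step:Q}, so the coefficient-extraction map is well-defined on the image curve. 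Once that bookkeeping is in place, the chain $t \in k(\gamma_0, \beta) \subseteq k(\gamma_0, a_0^\ast, \ldots, b_{d_1}^\ast) \subseteq k(u)(\gamma_0, \gamma_1)$ gives properness and the proof is complete.
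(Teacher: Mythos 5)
Your overall skeleton (reduce properness of $\bm{\gamma}$ to recovering the RUR parameter $w=\beta$, then use properness of $\bm{\alpha}$ to get $t\in k(\gamma_0,\beta)$) is a sensible reformulation, but the pivotal step is not proved: you assert that ``extracting the coefficients $a_i^\ast, b_j^\ast$ from $\gamma_1$ is a $k(u)$-rational operation on $\gamma_1$,'' and conclude $a_i^\ast, b_j^\ast \in k(u)(\gamma_0,\gamma_1)$. That principle is false in general: coefficient extraction is an operation in $k(x)(u)$ with $u$ as a distinguished transcendental, not a field operation inside the subfield generated by the function. For instance, for $f = x^3 + x^2u$ the (unique, normalized, coprime) representation has coefficients $x^3$ and $x^2$, yet neither lies in $k(u,f)$: the minimal polynomial $X^3+uX^2-f$ of $x$ over $k(u,f)$ is irreducible, so $[k(x,u):k(u,f)]=3$, while $x^2\in k(u,f)$ would force $x=(f-x\cdot x^2)/x^2\in k(u,f)$ (equivalently $x^3=f-ux^2$, hence $x=x^3/x^2$). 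So uniqueness of the coprime normalized representative --- which is what you flag as the ``expected obstacle'' and propose to settle via the degree bounds and $u\nmid D$ --- is beside the point: canonicity of the representation does not give membership of its coefficients in $k(u)(\gamma_0,\gamma_1)$.

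What makes the claim true here is the presence of the extra generator $\gamma_0$, and exploiting it requires an argument of exactly the kind the paper gives: if $k(u,\gamma_0,\gamma_1)\subsetneq k(x,u)$, choose an automorphism $\sigma$ of $\overline{k(x,u)}$ fixing $k(u,\gamma_0,\gamma_1)$ with $\sigma(x)\neq x$; since $\sigma$ fixes the nonconstant $\gamma_0\in k(x)$, it stabilizes $\overline{k(x)}$, and then, because $u$ is transcendental over $\overline{k(x)}$ and $\sigma(\gamma_1)=\gamma_1$, $\sigma$ must fix each coefficient $a_i^\ast, b_j^\ast$, hence fixes $\alpha_1$ (a $\mathbb{Q}$-linear combination of them), contradicting properness of $\bm{\alpha}$. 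Your chain $t\in k(\gamma_0,\beta)\subseteq k(\gamma_0,a_0^\ast,\ldots,b_{d_1}^\ast)\subseteq k(u)(\gamma_0,\gamma_1)$ becomes correct once the last inclusion is established by such an automorphism (or equivalent Galois-theoretic) argument; as written, that inclusion is exactly the content of the lemma and is left unjustified, so the proof has a genuine gap.
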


\begin{proof}
  Assume that the produced parametrization $\bm{\gamma} = (\gamma_0, \gamma_1)$ is not proper.
  This means that $k(u, \gamma_0, \gamma_1) \subsetneq k(x, u)$.
  Therefore, by~\cite[Theorem 9.29, p. 117]{milneFT} there exists an automorphism $\sigma$ of $\overline{k(x, u)}/k$ such that $\sigma|_{k(u, \gamma_0, \gamma_1)} = \operatorname{id}$ and $\sigma(x) \neq x$.
  Since $\gamma_0(\sigma(x)) = \gamma_0(x) \in k(x)$, we deduce that $\sigma(x) \in \overline{k(x)}$.
  Therefore, $\sigma$ can be restricted to $\overline{k(x)}$.
  Since $\sigma$ fixes $\gamma_1$  and $u$, and $u$ is transcendental over $\overline{k(x)}$, $\sigma$ fixes the coefficients of $\gamma_1$, that is, $a_0^\ast, \ldots, a_{d_0}^\ast, b_1^\ast, \ldots, b_{d_1}^\ast$.
  Since $\alpha_1$ is a $\mathbb{Q}$-linear combination of these, it is also fixed by $\sigma$.
  This contradicts the properness of $\bm{\alpha}$.
\end{proof}

The algorithm deciding the existence of a realization of a first order input-output equation by a rational dynamical system that we present in the next subsection is based on the notion of a \emph{strong rational general solution (SRGS)}. SRGS is a solution of an algebraic ODE (AODE) of the form \begin{equation} \label{AODE_SRGS}
    P (u,y, \operatorname{d}y / \operatorname{d}u ) = 0,
\end{equation} 
that depends rationally on a transcendental constant (for a precise definition, see~\cite[Definition 3.3]{Vo2018}).
Here~$P$ is an irreducible polynomial with coefficients in an algebraically closed field~$k$. Note that any SRGS of (\ref{AODE_SRGS}) provides a parametrization of the curve in~$\mathbb{A}^2$ defined by (\ref{AODE_SRGS}) over $\overline{k(u)}$ or any larger algebraically closed field (e.g., $\mathbb{U}$) with the transcendental constant arising in the solution being the parameter.

\begin{proposition}[{cf.~\cite[Theorem 3.7(iii)]{Johann}}] \label{proper_SRGS}
If the equation~\eqref{AODE_SRGS} has an SRGS, it also has an SRGS defining a proper parametrization of~$\mathcal{H}_P$. 
\end{proposition}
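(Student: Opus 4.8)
The plan is to deduce Proposition~\ref{proper_SRGS} from the existence of an arbitrary SRGS together with standard facts about reparametrizing rational curves. Suppose $\mathcal{P}(t) = (\chi_1(t), \chi_2(t))$ is a rational parametrization of $\mathcal{H}_P$ over $\overline{k(u)}$ coming from an SRGS, where the transcendental constant plays the role of $t$. The curve $\mathcal{H}_P$ is irreducible (since $P$ is irreducible), and any parametrization of an irreducible rational curve factors through a proper one: by L\"uroth's theorem, $\overline{k(u)}(\chi_1, \chi_2) = \overline{k(u)}(s)$ for some $s = \varphi(t)$ with $\varphi \in \overline{k(u)}(t)$, and then $\mathcal{P}(t) = \widehat{\mathcal{P}}(\varphi(t))$ for a proper parametrization $\widehat{\mathcal{P}}(s)$ of $\mathcal{H}_P$. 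This is the classical reduction to proper parametrizations (see, e.g., the rational-curves literature cited in the paper as~\cite{rational_book}); it is effective but for the proposition we only need existence.

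The remaining issue is to check that $\widehat{\mathcal{P}}(s)$ is itself an SRGS, i.e. that reparametrizing preserves the defining property of an SRGS: that the parametrization, viewed as a one-parameter family $y = y(u, s)$, actually satisfies the differential equation~\eqref{AODE_SRGS} with $s$ playing the role of a \emph{constant} (i.e. $\operatorname{d}s/\operatorname{d}u = 0$) and that $s$ is transcendental over $\overline{k(u)}$. Transcendence of $s$ over $\overline{k(u)}$ is immediate because $\overline{k(u)}(s) = \overline{k(u)}(\chi_1, \chi_2)$ has transcendence degree $1$ over $\overline{k(u)}$ (as $\mathcal{H}_P$ is a curve). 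For the differential condition: the SRGS property of $\mathcal{P}(t)$ says precisely that substituting $y = \chi_1$, $\operatorname{d}y/\operatorname{d}u = \chi_2$ into $P$ gives zero, where the derivative is taken treating $t$ as constant and $\chi_1$ as a function of $u$ through its coefficients; but this is exactly the statement that $(\chi_1, \chi_2)$ lies on the curve defined by $P(u, z_0, z_1) = 0$ \emph{and} that the second coordinate is the total $u$-derivative of the first. Since $\widehat{\mathcal{P}}$ parametrizes the same curve and $\chi_i = \widehat{\chi}_i \circ \varphi$, the chain rule (with $\operatorname{d}\varphi/\operatorname{d}u$ accounted for) shows $\widehat{\mathcal{P}}$ has the analogous property with its own parameter held constant. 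This compatibility of the ``derivative of first coordinate equals second coordinate'' relation under reparametrization is the one place that deserves a careful line of computation, and it is the main (mild) obstacle: one must verify that the identification of $\mathcal{H}_P$ as the solution curve of the AODE is respected by the reparametrization, which amounts to observing that the map $t \mapsto \varphi(t)$ is independent of $u$ — equivalently, that L\"uroth's generator $s$ can be chosen in $\overline{k(u)}(t)$ with the same $t$, which is automatic since we reparametrize the abstract curve $\mathcal{H}_P$ over the fixed base field $\overline{k(u)}$.

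Concretely, I would structure the proof as follows. First, invoke the hypothesis to obtain an SRGS and hence a parametrization $\mathcal{P}\colon \mathbb{A}^1_{\overline{k(u)}} \dashrightarrow \mathcal{H}_P$ defined over $\overline{k(u)}$. Second, apply L\"uroth's theorem over $\overline{k(u)}$ to factor $\mathcal{P} = \widehat{\mathcal{P}} \circ \varphi$ with $\widehat{\mathcal{P}}$ proper and $\varphi \in \overline{k(u)}(t) \setminus \overline{k(u)}$. Third, note $\widehat{\mathcal{P}}$ still has image Zariski-dense in $\mathcal{H}_P$ and its parameter is transcendental over $\overline{k(u)}$. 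Fourth, verify that $\widehat{\mathcal{P}}$, regarded as a family of solutions of~\eqref{AODE_SRGS} with its parameter held as a transcendental constant, indeed satisfies the AODE: this follows because $\widehat{\mathcal{P}}$ and $\mathcal{P}$ parametrize the identical algebraic curve $\mathcal{H}_P \subset \mathbb{A}^2$ whose points $(z_0, z_1)$ are, by construction of $\mathcal{H}_P$ from $P(y, u)$ with $\ord_y P = 1$, exactly the pairs $(y, y')$; since $\varphi$ does not depend on $u$, the relation $z_1 = \operatorname{d}z_0/\operatorname{d}u$ is preserved. Hence $\widehat{\mathcal{P}}$ yields an SRGS defining a proper parametrization of $\mathcal{H}_P$, as claimed. (If one prefers to avoid re-deriving this, the cited~\cite[Theorem 3.7(iii)]{Johann} can simply be quoted, since the statement is attributed there; the sketch above indicates why that result applies verbatim in our setting, where the base field is $\overline{k(u)}$ rather than the ground field.)
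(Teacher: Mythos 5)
There is a genuine gap at the decisive step of your argument. After applying L\"uroth over $\overline{k(u)}$ you factor $\mathcal{P} = \widehat{\mathcal{P}} \circ \varphi$ with $\widehat{\mathcal{P}}$ proper, and then claim that $\widehat{\mathcal{P}}$ is still an SRGS because ``$\varphi$ does not depend on $u$.'' But $\varphi$ is only guaranteed to lie in $\overline{k(u)}(t)$, so its coefficients may genuinely depend on $u$, and nothing in the L\"uroth factorization prevents this. Writing $\chi_i = \widehat{\chi}_i \circ \varphi$ and differentiating with respect to $u$ while holding the original constant fixed gives
\[
  \widehat{\chi}_2(\varphi(t)) \;=\; \frac{\partial \widehat{\chi}_1}{\partial u}(\varphi(t)) \;+\; \widehat{\chi}_1'(\varphi(t))\,\frac{\operatorname{d}\varphi}{\operatorname{d}u}(t),
\]
so the SRGS property transfers to $\widehat{\mathcal{P}}$ only if $\operatorname{d}\varphi/\operatorname{d}u = 0$, which is exactly what you would need to prove and is not automatic. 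Indeed, a proper parametrization of $\mathcal{H}_P$ need not be an SRGS at all: for $P = y' - 1$ the SRGS $y = u + c$ gives the parametrization $(u + c, 1)$, the ``obvious'' proper parametrization is $(s,1)$, the connecting map is $s = u + c$ (which depends on $u$), and $(s,1)$ with $s$ constant does not solve the equation. Note that the paper's Proposition~\ref{prop:constants} proves constancy of such a reparametrization only when the \emph{target} parametrization is already a proper SRGS; you cannot invoke that logic when the target is an arbitrary proper parametrization produced by L\"uroth.

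The missing ingredient is differential, not birational, and this is where the paper's proof goes a different way: it uses the structure of the SRGS algorithm of~\cite{Vo2018}. One starts from an optimal proper parametrization of $\mathcal{H}_P$ and asks for a $u$-dependent reparametrization $s(u)$ making the derivative relation hold; this is governed by the associated ODE, which (given that an SRGS exists) is a Riccati or first-order linear equation, hence has a rational general solution that is a linear fractional function of the constant of integration. Composing the proper parametrization with this M\"obius change of parameter preserves properness (\cite[Lemma 4.17]{SendraWinkler}) and yields an SRGS, proving the proposition. Your sketch omits exactly this step, and simply quoting \cite[Theorem 3.7(iii)]{Johann} ``verbatim'' is not justified by the argument you give, since the justification rests on the incorrect claim that the L\"uroth reparametrization has constant coefficients.
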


\begin{proof}
The algorithm for computing an SRGS of~\eqref{AODE_SRGS} presented in \cite{Vo2018} consists of finding an optimal proper parametrization of the corresponding rational curve and plugging a rational general solution of the \emph{associated ODE} (see~\cite[Definition 5.1]{Vo2018}) into this parametrization.

\cite[Theorem 5.4]{Vo2018} states that if an SRGS of~\eqref{AODE_SRGS} exists, an associated ODE is either a Riccati equation or a linear first-order equation. 
It is known (see, for instance, \cite[Sections A1.2 and A1.3]{Murphy}) that such equations have general solutions that are linear rational functions with respect to the constant of integration. 
In particular, if such an equation has a rational general solution, it necessarily has a rational general solution that is a linear rational function with respect to the constant of integration. 
By definition, such a solution is an SRGS.
This SRGS defines a linear rational substitution of the parameter on~$\mathcal{H}_P$ with the new parameter being the constant of integration. 
Since a linear rational transformation of the parameter transforms a proper parametrization into a proper one~\cite[Lemma 4.17]{SendraWinkler}, the claim is proved.
\end{proof}

\begin{proposition} \label{prop:constants}
Let~$y_1(u, c_1)$ and~$y_2(u,c_2)$ be two SRGS of~\eqref{AODE_SRGS} with~$y_1(u, c_1)$ providing a proper parametrization of  the corresponding curve. Then there exists $\varphi \in k(t)$ such that~$y_1(u, \varphi(c_2)) = y_2(u, c_2)$.
\end{proposition}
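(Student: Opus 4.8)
The plan is to reduce the statement to the uniqueness of the parametrization of $\mathcal{H}_P$ up to a linear rational change of parameter. The key observation is that both $y_1(u, c_1)$ and $y_2(u, c_2)$, as SRGS of~\eqref{AODE_SRGS}, are by definition parametrizations of the curve $\mathcal{H}_P$ over $\overline{k(u)}$, with the transcendental constants $c_1$ and $c_2$ serving as the respective parameters. Since $y_1(u, c_1)$ is a \emph{proper} parametrization, the associated field homomorphism $k(u)(\mathcal{H}_P) \to k(u)(c_1)$ is an isomorphism. Hence there is an inverse, and composing the map induced by $y_2$ with this inverse yields a field homomorphism $k(u)(c_2) \to k(u)(c_1)$; equivalently, an element $\psi \in k(u)(c_2)$ such that $y_1(u, \psi) = y_2(u, c_2)$ as elements of $\overline{k(u)}$.

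The main work is then to check that $\psi$ actually lies in $k(t)$ (i.e., it does not involve $u$) and that it is a rational function of $c_2$, not merely an algebraic one. For the first point, I would argue as in the proof of Lemma~\ref{lem:proper}: the coefficients (w.r.t.\ $u$) of the rational function $y_2(u, c_2)$ lie in $k(c_2)$, the coefficients of $y_1(u, c_1)$ lie in $k(c_1)$, and matching these coefficient-by-coefficient in the identity $y_1(u, \psi(c_2)) = y_2(u, c_2)$ produces equations relating $\psi$ to $c_2$ over $k$ alone; since $u$ is transcendental over $k(c_1, c_2)$, no genuine dependence on $u$ can survive, so $\psi \in k(c_2)$. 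For the rationality, the properness of $y_1$ is exactly what guarantees that the preimage of the generic point is a single point, so $\psi$ is single-valued and hence a bona fide rational function of $c_2$; formally this is the statement that a proper parametrization composed with any parametrization gives a rational (not algebraic) reparametrization, which is \cite[Lemma 4.17]{SendraWinkler} or can be derived directly from the isomorphism property above.

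The step I expect to be the main obstacle is making rigorous the passage from ``parametrizations of a curve over the non-closed field $k(u)$'' to the clean field-theoretic picture over $\overline{k(u)}$ or $\mathbb{U}$, and in particular ensuring that the constant $c_2$ remains transcendental and that the field $k(u, \text{coeffs of } y_2)$ behaves as expected — the coefficients of the SRGS live in $k(c_2)$ by the very definition of SRGS in~\cite[Definition 3.3]{Vo2018}, but one must invoke this carefully. Once that bookkeeping is set up, the argument is essentially: proper parametrization $\Rightarrow$ invertible on function fields $\Rightarrow$ the comparison map is an honest element $\varphi \in k(t)$, and the descent of $\varphi$ from $\overline{k(u)}$ to $k(t)$ follows by the transcendence-of-$u$ argument above.
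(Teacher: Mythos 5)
Your first step matches the paper: since $(y_1,\partial y_1/\partial u)$ is a proper parametrization of $\mathcal{H}_P$ and $(y_2,\partial y_2/\partial u)$ is another parametrization of the same curve, \cite[Lemma 4.17]{SendraWinkler} (equivalently, inverting the function-field isomorphism) gives a reparametrization $\varphi$, a priori in $\overline{k(u)}(t)$, with $y_1(u,\varphi(c_2))=y_2(u,c_2)$ and also $\bigl(\partial y_1/\partial u\bigr)\big\vert_{c_1=\varphi(c_2)}=\partial y_2/\partial u$. The gap is in the descent step, i.e.\ in showing that $\varphi$ does not involve $u$. Your argument — match the coefficients with respect to $u$, note they lie in $k(c_1)$ and $k(c_2)$, and invoke transcendence of $u$ — only uses the identity between the \emph{first} coordinates, and it is circular: writing $\psi(c_2)$ and ``matching coefficients w.r.t.\ $u$'' already presupposes that $\psi$ is free of $u$; if $\psi$ depends on $u$, the substituted expression is not presented as a rational function in $u$ with $u$-free coefficients and there is nothing to match. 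Moreover, the principle you appeal to is false: take $y_1(u,c_1)=c_1-u$ and $y_2(u,c_2)=c_2$ (each rational and linear in its constant, with coefficients in $k(c_1)$, resp.\ $k(c_2)$); then $y_1(u,\psi)=y_2$ holds with $\psi=c_2+u\notin k(c_2)$. Of course these are solutions of \emph{different} equations of the form~\eqref{AODE_SRGS}, but your descent argument never uses that they solve the same one beyond the first-coordinate identity, so it cannot rule this phenomenon out.

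The missing idea, which is exactly what the paper's proof supplies, is to exploit the second coordinate of the parametrization, namely that it is the $u$-derivative of the first. Differentiate $y_1(u,\varphi(c_2))=y_2(u,c_2)$ with respect to $u$ using the chain rule (allowing $\varphi$ to depend on $u$), and subtract the identity between the derivative coordinates; this yields
\[
\left(\dfrac{\partial y_1(u,c_1)}{\partial c_1}\right)\Big\vert_{c_1=\varphi(c_2)}\cdot\dfrac{\operatorname{d}\varphi(c_2)}{\operatorname{d}u}=0,
\]
and since $\partial y_1/\partial c_1\neq 0$ (otherwise $y_2$ would not depend on its constant), one gets $\operatorname{d}\varphi/\operatorname{d}u=0$, hence $\varphi\in k(t)$. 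Without some use of the derivative coordinate (or, equivalently, of the fact that both functions solve the same equation~\eqref{AODE_SRGS}), the conclusion $\varphi\in k(t)$ does not follow from the structure of the coefficients alone.
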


\begin{proof}
Since both~$(y_1, \partial y_1 / \partial u)$ and~$(y_2, \partial y_2 / \partial u)$ provide parametrizations of the same curve and the parametrization corresponding to~$y_1$ is proper, by \cite[Lemma 4.17]{SendraWinkler}, there exists~$\varphi\in\overline{k(u)}(t)$ such that:
\[
  y_1(u, \varphi(c_2)) = y_2(u, c_2) \; \text{and}\; \left( \partial y_1(u, c_1) / \partial u \right)\big\vert_{c_1 = \varphi(c_2)} = \partial y_2(u, c_2) / \partial u.
\]
Differentiating the former with respect to~$u$, we obtain
\[
  \left(\dfrac{\partial y_1(u, c_1)}{\partial u}\right)\big\vert_{c_1 = \varphi(c_2)} + \left(\dfrac{\partial y_1(u, c_1)}{\partial c_1}\right)\big\vert_{c_1 = \varphi(c_2)}\dfrac{\operatorname{d}\varphi(c_2)}{\operatorname{d}u} = \dfrac{\partial y_2(u, c_2)}{\partial u}.
\]
By combining the equations above, we obtain \[
\left(\dfrac{\partial y_1(u, c_1)}{\partial c_1}\right)\big\vert_{c_1 = \varphi(c_2)}\dfrac{\operatorname{d}\varphi(c_2)}{\operatorname{d}u} = 0,
\]
so~$\dfrac{\operatorname{d}\varphi(c_2)}{\operatorname{d}u}=0$, which means that~$\varphi \in k(t)$. 
\end{proof}


\subsection{Algorithm for rational realizations}\label{sec:sgrs_algo}

\begin{algorithm} 
\caption{First order realizations}\label{alg:first_order}
\begin{description}[itemsep=0pt]
\item[Input ] 
    Irreducible polynomial $P \in k[y,y',u,u']$ depending nontrivially on $y'$ and~$u'$;
\item[Output ] A system of the form (\ref{eq:ODEmodel}) such that~$P=0$ is the input-output equation for the system or \texttt{NO} if there is no such system.
\end{description}

\begin{enumerate}[label = \textbf{(S\arabic*)}, leftmargin=*, align=left, labelsep=2pt, itemsep=0pt]
    \item Let $Q := P(y, au' + b, u, u') \in k[y, a, b, u, u']$.
    Let $c_0 \in k[a, y, u]$ and $c_1 \in k[b, y, u]$ be the leading and the constant coefficients of $Q$ as a polynomial in $u'$.
    
    \item \label{step:parametrizations} For each irreducible factor $h_0(a, y, u)$ of $c_0$, do:
    \begin{enumerate}
        \item \label{step:SRGS_comp} Compute $y_0(u, c)$, an SRGS of $h_0\left(\dfrac{\operatorname{d}y}{\operatorname{d}u}, y, u\right) = 0$ defining a proper parametrization of the corresponding curve.
        If there is no solution, go to the next factor.
        \item For each irreducible factor $h_1(b, y, u)$ of $c_1$, do
        \begin{enumerate}
            \item Consider the numerator $N(b, c) \in k(u)[b, c]$ of $h_1(b, y_0(u, c), u)$.
            \item\label{step:inner_parametrization} Use Algorithm~\ref{alg:special_param} to find a proper parametrization (see Lemma~\ref{lem:proper}) $(b(x), c(x))$ of $N(b, c) = 0$ such that $b(x) \in k(u, x)$ and $c \in k(x)$.
            \item Set 
            \[ 
            g(x, u) := y_0(u, c(x))\;\text{ and }\; f(x, u) := b(x) / \frac{\partial g(x, u)}{\partial x}.
            \]
            \item\label{step:check} If the input-output equation of  $\Sigma$ being
            \[
              x' = f(x, u),\quad y = g(x, u)
            \]
            is equal to $P$, \textbf{return} $\Sigma$.
        \end{enumerate}
    \end{enumerate}
    
    \item\label{step:fail} \textbf{Return} \texttt{NO}.
\end{enumerate}
\end{algorithm}

\begin{proposition}\label{prop:alg2}
Algorithm \ref{alg:first_order} is correct. 
\end{proposition}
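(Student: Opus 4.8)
The proof of Proposition~\ref{prop:alg2} must establish two things: (i) if the algorithm returns a system $\Sigma$, then $P=0$ is indeed its input-output equation; and (ii) if a rational realization of $P$ exists, then the algorithm finds one (rather than returning \texttt{NO}). Part (i) is essentially immediate: the algorithm only returns $\Sigma$ after the explicit check in step~\ref{step:check}, so correctness of a positive answer is built in. The substance is entirely in part (ii), for which the strategy is to take an arbitrary rational realization, invoke Theorem~\ref{thm:sussmann} to reduce to dimension $h=1$, and then show that the pair of curve-parametrization subproblems that the algorithm solves must have solutions compatible with the ones arising from that realization.

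First I would analyze the structure of a dimension-one realization $\Sigma\colon x'=f(x,u),\ y=g(x,u)$ of $P$. By Lemma~\ref{lem:linear_system} (with $h=1$, $n=1$), we have $\gamma_0=g$, $\gamma_1=\mathcal L_\Sigma(g)=f\,\partial g/\partial x+D_u(g)$, and $P(\gamma_0,\gamma_1,u,u')=0$ as an element of $k(x,u^{(\infty)})$. Writing $\mathcal L_\Sigma(g)=f\cdot(\partial g/\partial x)+u'\cdot(\partial g/\partial u)$, substituting $z_1 = au'+b$ with $a=\partial g/\partial u$ and $b = f\cdot\partial g/\partial x$ is exactly the change that makes the leading/constant coefficients $c_0,c_1$ in $u'$ relevant. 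The relation $P(y, \mathcal L_\Sigma(g), u, u')=0$ holds identically in $u'$, so in particular its leading coefficient in $u'$, namely $c_0(\partial g/\partial u,\, g,\, u)$, vanishes; since $c_0$ factors into irreducibles over $k$, some irreducible factor $h_0$ satisfies $h_0(\partial g/\partial u, g, u)=0$. Viewing $u$ as the independent variable and $y=g(x,u)$ as a function of $u$ depending on the parameter $x$, this says precisely that $g(x,u)$ is a one-parameter rational solution of the AODE $h_0(\mathrm dy/\mathrm du, y, u)=0$ — hence (after checking it genuinely depends on $x$, which follows from dominance of $\bm\gamma$) a strong rational general solution. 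By Proposition~\ref{proper_SRGS} there is one defining a proper parametrization, and by Proposition~\ref{prop:constants} any SRGS, in particular $g$ itself, is obtained from the proper one $y_0(u,c)$ computed in step~\ref{step:SRGS_comp} by a substitution $c\mapsto\varphi(x)$ with $\varphi\in k(t)$. Similarly, the constant coefficient relation $c_1(f\cdot\partial g/\partial x,\, g,\, u)=0$ forces some irreducible factor $h_1$ to vanish at $(b,y)=(f\cdot\partial g/\partial x,\, g)$; after substituting $y_0(u,c)$ for $y$ this becomes the curve $N(b,c)=0$, of which $(b,c)=(f\cdot\partial g/\partial x,\ \varphi(x))$ is a point, so $N=0$ has a rational parametrization with $c$-component in $k(x)$ and $b$-component in $k(u,x)$; Algorithm~\ref{alg:special_param} together with Lemma~\ref{lem:proper} produces a proper such parametrization $(b(x),c(x))$ in step~\ref{step:inner_parametrization}.

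The remaining point is to confirm that the reconstructed $g(x,u):=y_0(u,c(x))$ and $f(x,u):=b(x)/(\partial g/\partial x)$ really do form a realization — i.e.\ that the final check in step~\ref{step:check} succeeds. Here one runs the computation of the input-output equation of $x'=f(x,u)$, $y=g(x,u)$ forward: $\mathcal L_\Sigma(g)=b(x)+u'\,\partial g/\partial u$, and by construction $(\partial g/\partial u, g, u)$ annihilates $h_0$ hence $c_0$, while $(b(x), g, u)$ annihilates $h_1$ hence $c_1$; combined with the fact that $P(y, au'+b, u, u')$, divided by its content, is (up to the ansatz) determined by the vanishing of $c_0$ on the leading term and $c_1$ on the constant term together with the intermediate coefficients, one shows $P\in J_\Sigma$, and dominance (properness of the parametrizations) guarantees it is the minimal-order generator. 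I would, however, be careful: the algorithm iterates over all pairs of irreducible factors $(h_0,h_1)$ and returns on the first success, so one must argue that even if the factors giving rise to the original realization are not the first tried, whichever pair succeeds first yields a valid $\Sigma$ — which is exactly what the explicit step~\ref{step:check} guarantees — and that the pair coming from the genuine realization does pass the check, so the loop cannot fall through to step~\ref{step:fail}.

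\textbf{Main obstacle.} The delicate part is the bookkeeping around SRGS: matching the "abstract" solution $g(x,u)$ arising from an arbitrary realization with the specific proper SRGS $y_0(u,c)$ computed by the algorithm, and verifying that the substitution relating them is a \emph{rational} (not merely algebraic) function of a single parameter — this is where Propositions~\ref{proper_SRGS} and~\ref{prop:constants} do the heavy lifting, and one must check their hypotheses apply (properness, that $g$ really depends on the parameter). A secondary subtlety is ensuring the degree bounds $\deg_{u'}$-leading/constant-coefficient extraction in step~1 and the degree bounds inside Algorithm~\ref{alg:special_param} are not too tight to miss the solution, which reduces to the divisibility argument (numerator divides the constant term, denominator divides the leading term) already used in the proof of Proposition~\ref{prop:correctness_one}.
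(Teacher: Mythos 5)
Your overall architecture matches the paper's proof: soundness is delegated to the explicit check in step~\ref{step:check}, and for completeness you reduce to a one-dimensional realization $\Sigma_0\colon x'=f_0,\ y=g_0$ via Theorem~\ref{thm:sussmann}, observe that the leading coefficient forces $c_0(\partial g_0/\partial u, g_0,u)=0$ so that $g_0$ is an SRGS of some factor $h_0$, invoke Propositions~\ref{proper_SRGS} and~\ref{prop:constants} to write $g_0(x,u)=y_0(u,\varphi(x))$ with $\varphi\in k(t)$, and use the constant coefficient to land on a factor $h_1$ and the curve $N(b,c)=0$, parametrized properly via Algorithm~\ref{alg:special_param} and Lemma~\ref{lem:proper}. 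Up to that point you are in step with the paper.

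The gap is in your last step, where you must show that the reconstructed system with $g(x,u)=y_0(u,c(x))$ and $f=b(x)/\frac{\partial g}{\partial x}$ actually passes the check, i.e.\ that $P\bigl(g,\mathcal{L}_\Sigma(g),u,u'\bigr)=0$. You argue that $(\partial g/\partial u,g,u)$ annihilates $c_0$ and $(b(x),g,u)$ annihilates $c_1$, and that this ``together with the intermediate coefficients'' yields $P\in J_\Sigma$. But the vanishing of only the leading and constant coefficients of $Q=P(y,au'+b,u,u')$ in $u'$ says nothing about the intermediate coefficients, which mix $a$ and $b$; nothing in your construction forces them to vanish at the candidate, and the claim that $P$ is ``determined'' by $c_0$ and $c_1$ is unjustified. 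So, as written, you cannot conclude that the constructed pair $(h_0,h_1,y_0,(b,c))$ leads to a successful check, and the loop could in principle fall through to \texttt{NO}. The missing idea is to use the \emph{properness} of $(b(x),c(x))$ not only for its existence but to factor the data of the genuine realization through it: properness yields $s(x)\in k(x)$ with $b_0(x,u)=b(s(x),u)$ and $\varphi(x)=c(s(x))$, whence $g_0(x,u)=g(s(x),u)$ and, computing $\mathcal{L}_\Sigma(g)=\frac{\partial y_0(u,c(x))}{\partial u}u'+b(x,u)$, also $\mathcal{L}_{\Sigma_0}(g_0)=\mathcal{L}_\Sigma(g)\big|_{x\mapsto s(x)}$. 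Since $P$ vanishes identically at $\bigl(g_0,\mathcal{L}_{\Sigma_0}(g_0)\bigr)$ and $s$ is nonconstant, $P$ vanishes at $\bigl(g,\mathcal{L}_\Sigma(g)\bigr)$, so $P$ is the input-output equation of the constructed system and the check succeeds. This pullback via $s(x)$ is the crux of the paper's argument and is absent from your proposal.
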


\begin{proof}
First we note that, thanks to the step~\ref{step:check}, if an ODE system is returned, it satisfies the specification of the algorithm.
Therefore, it is sufficient to prove that if the polynomial $P$ is realizable, the algorithm will return its realization.

Assume that $P$ is realizable.
By Theorem~\ref{thm:sussmann}, it is realizable by a one-dimensional system, we will denote it by $\Sigma_0$:
\[
  x' = f_0(x, u), \quad y = g_0(x, u).
\]
We also compute $\mathcal{L}_{\Sigma_0} (g_0) := \frac{\partial g_0}{\partial u}(x, u) u' + b_0(x, u)$.
Since 
\[
P(g_0, \mathcal{L}_{\Sigma_0} (g_0), u, u') = 0, 
\]
we have $c_0(\frac{\partial g_0}{\partial u}(x, u), g_0, u) = 0$, so $g_0(x, u)$ is an SRGS of one of the irreducible factors of $c_0$, we will denote this factor by $h_0$.
Consider the SGRS $y_0(u, c)$ computed by the algorithm for $h_0$ at step~\ref{step:SRGS_comp}.
By Proposition~\ref{prop:constants}, there exists a rational function $\varphi(x) \in k(x)$ such that $g_0(x, u) = y_0(u, \varphi(x))$.

The vanishing of $P(g_0, \mathcal{L}_{\Sigma_0} (g_0), u, u')$ implies that $c_1(b_0, g_0, u) = 0$, so $(b_0, g_0)$ annihilates one of the irreducible factors of $c_1$, say $h_1$.
Let $(b(x, u), c(x))$ be the parametrization computed at the step~\ref{step:inner_parametrization}. 
Then we have
\[
   h_1(b_0(x, u), y_0(u, \varphi(x)), u) = h_1(b(x, u), y_0(u, c(x)), u) = 0.
\]
The properness of the parametrization $(b(x, u), c(x))$ (see Lemma~\ref{lem:proper}) implies that there exists $s(x) \in k(x)$ such that $b_0(x, u) = b(s(x), u)$ and $\varphi(x) = c(s(x))$.
The system $\Sigma$ produced by the algorithm will have the following $f$ and $g$:
\[
  g(x, u) = y_0(u, c(x)), \quad f(x, u) = \frac{b(x, u)}{c'(x) z(c(x), u)},
\]
where $z := \frac{\partial y_0(x, c)}{\partial c}$.
The Lie derivative of $g$ will be
\begin{multline*}
  \mathcal{L}_\Sigma(g) = \frac{\partial y_0(u, c(x))}{\partial u} u' + c'(x) z(c(x), u) \frac{b(x, u)}{c'(x) z(c(x), u)} =\\= \frac{\partial y_0(u, c(x))}{\partial u} u' + b(x, u).  
\end{multline*}
Therefore, the pair $g_0, \mathcal{L}_{\Sigma_0}(g_0)$ can be obtained from $g, \mathcal{L}_{\Sigma}(g)$ by a substitution $x \mapsto s(x)$.
Therefore $P$ vanishes at $g, \mathcal{L}_{\Sigma}(g)$, so it is the input-output equation for $\Sigma$.
\end{proof}

\subsection{Algorithm for input-affine realizations}\label{sec:input_affine_algo}

\begin{algorithm}
\caption{First order input-affine realizations}\label{alg:first_order_ia}
\begin{description}[itemsep=0pt]
\item[Input ] 
    Irreducible polynomial $P \in k[y,y',u,u']$ depending nontrivially on $y'$ and~$u'$;
\item[Output ] An input-affine system of the form (\ref{eq:ODEmodel}) such that~$P=0$ is the input-output equation for this system or \texttt{NO} if there is no such system.
\end{description}

\begin{enumerate}[label = \textbf{(S\arabic*)}, leftmargin=*, align=left, labelsep=2pt, itemsep=0pt]
    \item\label{step:plug} Consider $\widetilde{P} = P(a_1u + a_0, b_2u^2 + b_1u + b_0 + a_1 u', u, u')$. 
    
    \item Consider the leading coefficient of $\widetilde{P}$ w.r.t. $u'$ and, in this coefficient, the leading coefficient~$q\in k[a_0,a_1]$ w.r.t.~$u$.
    
    \item\label{step:iaparams} For each irreducible factor~$q_0(a_0,a_1)$ of~$q$ do:
    \begin{enumerate}
        \item Find a proper rational parametrization~$a_0(s), a_1(s)$ of the curve~$q_0(a_0,a_1) = 0$. 
        If the curve is not rational, go to the next irreducible factor. 
        
        \item Plug~$a_0(s), a_1(s)$ into~$\widetilde{P}$ to obtain a polynomial in~$k(s, b_0, b_1, b_2)[u,u']$, and clear the denominators to obtain~$P_0 \in  k[s, b_0, b_1, b_2, u, u']$.
        
        \item Let $V$ be the variety defined by the coefficients of $P_0$ with respect to $u, u'$ in the space with coordinates $s, b_0, b_1, b_2$ over the field $k$.
        Then $\dim V \leqslant 1$ (see proof of Proposition~\ref{prop:alg3}).
        
        \item\label{step:component_loop} For each irreducible one-dimensional component~$V_0$ of $V$:
        \begin{enumerate}
            \item\label{step:param_s} Find a proper rational parametrization~$s(x), b_0(x), b_1(x), b_2(x)$ of~$V_0$. 
            If~$V_0$ is not rational, move to the next component. 
            
            \item Let~$a_i(x) := a_i(s(x))$ for $i = 0, 1$. 
            
            \item Let~$c_1(x):=b_2(x) / \dfrac{\operatorname{d}a_1}{\operatorname{d}x}$ and~$c_0(x):=b_0(x) / \dfrac{\operatorname{d}a_0}{\operatorname{d}x}$.
            
            \item\label{step:return_ia} \textbf{Return } $\begin{cases}
             x' = c_1(x)u+c_0(x),\\
             y = a_1(x)u+a_0(x).
             \end{cases}$
            
        \end{enumerate}
    \end{enumerate}
    
   \item\label{step:endgame} \textbf{Return} \texttt{NO}

\end{enumerate}
\end{algorithm}

\begin{proposition}\label{prop:alg3}
Algorithm \ref{alg:first_order_ia} is correct.
\end{proposition}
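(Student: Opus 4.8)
The plan is to follow the two-part template of the proof of Proposition~\ref{prop:alg2}: show that \emph{(soundness)} any system returned in step~\ref{step:return_ia} has $P$ as its input-output equation, and \emph{(completeness)} if $P$ admits an input-affine realization then the algorithm does not reach step~\ref{step:endgame}. Both directions rest on Theorem~\ref{thm:sussmann}, by which an input-affine realization of $P$, if one exists, can be taken one-dimensional, hence of the shape $x' = c_1(x)u + c_0(x)$, $y = a_1(x)u + a_0(x)$ (forced by input-affinity and $h = 1$). A direct computation of the Lie derivative for such a system $\Sigma$ gives
\[
  \mathcal{L}_\Sigma(y) = c_1a_1'\,u^2 + (c_1a_0' + c_0a_1')\,u + c_0a_0' + a_1u',
\]
so one-dimensional input-affine realizations of $P$ correspond to tuples $(a_0,a_1,b_0,b_1,b_2)$ with $b_2 = c_1a_1'$, $b_1 = c_1a_0' + c_0a_1'$, $b_0 = c_0a_0'$ for which the polynomial $\widetilde P$ of step~\ref{step:plug} vanishes identically in $u$ and $u'$. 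This is precisely the ambient family set up in step~\ref{step:plug}, and the remaining steps search through it.

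Next I would establish the bound $\dim V \leqslant 1$. The leading coefficient of $\widetilde P$ with respect to $u'$, and then the leading coefficient of the latter with respect to $u$, is a polynomial $q(a_0,a_1)$ not involving $b_0,b_1,b_2$; hence any realizing tuple must satisfy $q(a_0,a_1)=0$, which is why it suffices to work over a parametrization $a_0(s),a_1(s)$ of an irreducible factor $q_0 = 0$. After this substitution the projection $V \to \mathbb{A}^1_s$ has finite fibers: for a fixed generic $s$ the element $y = a_1(s)u + a_0(s)$ is determined, so an admissible $y' = b_2u^2 + b_1u + b_0 + a_1(s)u'$ must be one of the finitely many roots, in $y'$, of $P(a_1(s)u+a_0(s),y',u,u')=0$, and distinct tuples $(b_0,b_1,b_2)$ yield distinct such roots. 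Hence $\dim V \leqslant 1$.

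For completeness, fix a one-dimensional input-affine realization $\Sigma_0$ from Theorem~\ref{thm:sussmann}. Its data satisfies $q(a_0(x),a_1(x)) = 0$, so the image of $x \mapsto (a_0(x),a_1(x))$ lies on a single factor $q_0 = 0$; since $\gamma_0 = a_1u + a_0$ is transcendental over $\mathbb{U}$ (as needed for a dimension-$1$ realization), the $a_i$ are not both constant, so this image is dense in the curve $q_0 = 0$ and birationality of the proper parametrization yields $s(x)$ with $a_i(x) = a_i(s(x))$. Consequently $x \mapsto (s(x),b_0(x),b_1(x),b_2(x))$ is a nonconstant map into $V$, so its image lies on a one-dimensional component $V_0$. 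Proper-parametrizing $V_0$ and matching it with $\Sigma_0$ through the appropriate reparametrization of $x$ (the analogue of the use of Proposition~\ref{prop:constants} in the proof of Proposition~\ref{prop:alg2}), one checks that $c_1 = b_2/a_1'$ and $c_0 = b_0/a_0'$ recover $\Sigma_0$ up to that reparametrization, so the algorithm outputs a realization before step~\ref{step:endgame}.

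For soundness, suppose $\Sigma$ is returned in step~\ref{step:return_ia}, built from a proper parametrization $(s(x),b_0(x),b_1(x),b_2(x))$ of a rational one-dimensional component $V_0 \subseteq V$ together with $c_1 = b_2/a_1'$, $c_0 = b_0/a_0'$. Membership in $V$ gives $P(a_1u + a_0,\,b_2u^2 + b_1u + b_0 + a_1u',\,u,u') \equiv 0$, so $\gamma = (a_1u + a_0,\; b_2u^2 + b_1u + b_0 + a_1u')$ is a dominant map $\mathbb{A}^1_\mathbb{U} \dashrightarrow \mathcal{H}_P$. To conclude via Lemma~\ref{lem:linear_system} that $\Sigma$ realizes $P$, I need the unique solution $Z_1 = (\gamma_1 - D_u(\gamma_0))/(\partial\gamma_0/\partial x) = (b_2u^2 + b_1u + b_0)/(a_1'u + a_0')$ of~\eqref{eq:linear_system_main} to lie in $k(x) + k(x)u$ and to equal $c_1u + c_0$, i.e., I need the identity $b_1 = c_1a_0' + c_0a_1'$ to hold along $V_0$. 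I expect this to be \emph{the main obstacle}: it says that every rational one-dimensional component of $V$ that the algorithm can select in fact comes from an input-affine realization. I would try to prove it by showing that the forgetful map from the variety of tuples $(s,b_0,b_1,b_2,c_0,c_1)$ satisfying the three relations together with $\widetilde P = 0$ onto $V$ is dominant on each component; then a generic point of $V_0$ lifts, and since $a_0',a_1'$ are nonzero there the lift is unique and is exactly $c_1 = b_2/a_1'$, $c_0 = b_0/a_0'$. Should a general argument prove elusive, the clean fallback is to append to the algorithm a verification that the returned system's input-output equation equals $P$, as in step~\ref{step:check} of Algorithm~\ref{alg:first_order}; then soundness is immediate and only the completeness argument above remains.
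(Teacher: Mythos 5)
Your completeness argument and your bound $\dim V\leqslant 1$ match the paper in substance: the paper bounds $\dim V$ by showing, as in Proposition~\ref{prop:correctness_one}, that the coefficients of $P_0$ cut out a zero-dimensional set over $k(s)$, while you use finiteness of the fibers of the projection $V\to\mathbb{A}^1_s$ (both are valid; yours is the more elementary route), and your use of Theorem~\ref{thm:sussmann}, of a factor $q_0$ of $q$, and of the proper parametrization to produce $s(x)$ with $a_{0,i}(x)=a_i(s(x))$ is exactly the paper's argument. The divergence is in soundness, and there your proposal stops short of a proof: you explicitly leave open the identity $b_1=c_1a_0'+c_0a_1'$ along the component the algorithm parametrizes, which is precisely what is needed for $\mathcal{L}_\Sigma(a_1u+a_0)=b_2u^2+b_1u+b_0+a_1u'$ and hence for $P$ to vanish at $(y,\mathcal{L}_\Sigma(y))$. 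The paper closes this point in one line, by invoking Proposition~\ref{prop:spec_param_fo} together with the asserted implication $P_0(s(x),b_0(x),b_1(x),b_2(x),u,u')=0\implies P(y,\mathcal{L}_\Sigma(y),u,u')=0$; your sketched ``dominance of the forgetful map'' argument is not carried out, and nothing in the definition of $V$ formally imposes that identity on an arbitrary rational one-dimensional component — it is one extra relation, $b_1a_0'a_1'=b_2(a_0')^2+b_0(a_1')^2$, which is exactly what distinguishes components induced by input-affine realizations from other parametrizations of $\mathcal{H}_P$ of the same ansatz shape. So, measured against the proposition as stated, your soundness half is a genuine gap; your fallback of appending a verification as in step~\ref{step:check} of Algorithm~\ref{alg:first_order} would make soundness trivial, but it proves correctness of a modified algorithm, not of Algorithm~\ref{alg:first_order_ia} as written.

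What can be salvaged along the paper's lines is the following partial statement, which you almost have: on the particular component arising in your completeness argument (the closure of the image of $x\mapsto(r(x),b_{0,0}(x),b_{0,1}(x),b_{0,2}(x))$ induced by the true realization $\Sigma_0$), the identity does hold, because there $b_{0,2}=c_{0,1}a_{0,1}'$, $b_{0,1}=c_{0,1}a_{0,0}'+c_{0,0}a_{0,1}'$, $b_{0,0}=c_{0,0}a_{0,0}'$ by the Lie-derivative computation, and the relation $b_1a_0'a_1'-b_2(a_0')^2-b_0(a_1')^2=0$ only rescales by $(\rho')^2$ under a reparametrization $x\mapsto\rho(x)$, hence persists for the proper parametrization chosen in step~\ref{step:param_s}. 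This settles the ``good'' component, but not an arbitrary component that the loop in step~\ref{step:component_loop} may reach first — which is the very obstacle your last paragraph isolates and which the paper's displayed implication asserts without further argument. To match the paper's proof you would have to either prove that implication (i.e., that every rational one-dimensional component of $V$ satisfies the identity) or justify a verification step; as it stands, your write-up correctly localizes the difficulty but does not resolve it.
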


\begin{proof}
First we prove that the system returned in step \ref{step:return_ia} (denote it by~$\Sigma$) satisfies the specification of the algorithm.
This is ensured by Proposition~\ref{prop:spec_param_fo} combined with
\begin{multline*}
   P_0(s(x), b_0(x), b_1(x), b_2(x), u, u') = 0 \implies \\  P(y(x,u), \mathcal{L}_\Sigma(y(x,u)),u,u') = 0.
\end{multline*}

Assume~$P=0$ is realizable by an input-affine system.
By Theorem~\ref{thm:sussmann} it is realizable by a one-dimensional input-affine system, we will denote it by~$\Sigma_0$:
\[
x' = c_{0, 1}(x)u + c_{0, 0}(x),\quad y = g_0(x, u) = a_{0, 1}(x)u + a_{0, 0}(x).
\]
Since~$P$ vanishes at~$g_0, \mathcal{L}_{\Sigma_0}(g_0)$, $q$ vanishes at~$a_{0, 0}(x), a_{0, 1}(x)$.
Thus, at least one irreducible factor of~$q$, say~$q_0$, vanishes at~$a_{0, 0}(x), a_{0, 1}(x)$. 
Let~$a_0(s), a_1(s)$ be the parametrization of~$q_0=0$ obtained on step~\ref{step:param_s}. 
By \cite[Lemma 4.17]{SendraWinkler} there exists~$r(x) \in k(x)$ such that $a_{0, i}(x) = a_i(r(x))$ for $i = 0, 1$.

Analogously to the proof of Proposition~\ref{prop:correctness_one}, one can prove that coefficients of~$P_0$ define a zero-dimensional variety over~$k(s)$, i.e. a variety of dimension at most one over~$k$. 

We write $\mathcal{L}_{\Sigma_0}(g_0) = a_{0, 1}u' + b_{0, 2} u^2 + b_{0, 1} u + b_{0, 0}$.
Since~$P$ vanishes at~$g_0, \mathcal{L}_{\Sigma_0}(g_0)$, all the coefficients of~$P_0$ w.r.t. $u, u'$ vanish at~$r(x), b_0(x), b_1(x), b_2(x)$. 
This yields a parametrization of a one-dimensional component of $V$, say $V_0$.
Thus, once the algorithm will reach $V_0$ in step~\ref{step:component_loop}, it will find a realization.
\end{proof}


\section{Examples}\label{sec:examples}

Maple worksheets with all the details of the computations for the examples presented in this paper are available at~\cite{MapleNotebooks}.

\begin{example}[SIR model with input]\label{ex:SIR}
Consider a version of the standard SIR model from epidemiology which we have augmented with an input to the susceptible compartment (e.g., regulated travel of unvaccinated individuals):
\[
\begin{cases}
  S' = \Lambda - \mu S - \frac{\beta S I}{S + I + R} + u,\\
  I' = \frac{\beta S I}{S + I + R} - \mu I - \gamma I,\\
  R' = \gamma I - \mu R,
\end{cases}
\]
where $\Lambda, \mu, \beta, \gamma$ are scalar parameters and $u$ is the input.
We will assume that the output is $y = R$.
We will not give the full expressions arising in this computation due to their size, full details can be found in the Maple worksheet.

Computation using~\cite{structidjl} yields a differential equation in $y$ and $u$ of the respective orders $3$ and zero, so we will use Algorithm~\ref{alg:special_param}.
We have $d_0 = 1$ and $d_1 = 0$ thus the ansatz $y^{(3)} = a_1 u + a_0$ will be used.
This yields equations $F_1=0$ and $F_2=0$ such that $F_1$ is linear in $a_0$ and does not involve $a_1$ and $F_2$ is linear in $a_1$ and does not involve $a_0$.
Therefore, one can take $y = x_1, y' = x_2, y'' = x_3$ and extend this parametrization to $a_0$ and $a_1$ by solving linear equations.
This will yield a realization of the equation in $y$ and $u$ (different from the original ODE system).
\end{example}

For further examples we refer the reader to the appendix.

\begin{acks}
  GP was supported by the Paris Ile-de-France region (``XOR'' project) and NSF grants DMS-1853482, DMS-1760448, and
DMS-1853650.
  The authors are grateful to the referees for careful reading and numerous suggestions.
  The authors are grateful to Johann Mitteramskogler, Edurado Sontag, and Yuan Wang for helpful discussions.
\end{acks}

\bibliographystyle{ACM-Reference-Format}
\bibliography{bibliography}

\section*{Appendix: further examples}

\begin{example}\label{ex:SontagWang}
  The following example of a realization problem was considered in~\cite[Section 8]{SontagYuan}:
  \[
      uy'' = y^2 u^2 + y' u'.
  \]
  We will follow the same strategy as in Algorithm~\ref{alg:first_order}.
  Plugging~$y'' = au''+c$ and~$y' = au'+b$ into this equation yields
  $$-u^2y^2+auu''-au'^2-bu'+cu = 0$$
  The coefficient at~$u''$ must be zero, thus,~$a=0$,~$y' = b$ and~$y'' = pu' + q$, where~$b,p,q$ do not involve~$u'$ but can, in principle, involve~$u$.
  Plugging these expressions for~$y'$ and~$y''$ into the initial equation, we obtain
  $$(pu - b)u' - y^2u^2 + qu=0.$$
  Thus, we have
  $pu = b$ and $q-y^2u=0$.
  Note that~$p = \operatorname{d}b / \operatorname{d}u$, i.e. the first equation of this system is a DAE in~$b(u)$. Its SRGS is~$b = x_1u$, where~$x_1$ is a transcendental constant. The second equation of the system defines a rational curve over~$\overline{k(u)}$ with parametrization~$y=x_2, q=x_2^2u$. 
  Therefore
  $$
  y = x_2,\quad y' = x_1u,\quad y'' = x_1u' + ux_2^2.
  $$
  Using the approach from Lemma~\ref{lem:linear_system}, we find a realization
  $$
  x_1' = x_2^2,\quad  x_2' = x_1u,\quad y=x_2. 
  $$
\end{example}

\begin{example}[Predator-prey model]\label{ex:predator_prey1}
Consider the following predator-prey model with an input influencing the predator population:
\[
\begin{cases}
  x_1' = k_1 x_1 - k_2 x_1 x_2,\\
  x_2' = -k_3 x_2 + k_4 x_1 x_2 + k_5 u,\\
  y = x_1,
\end{cases}
\]
where $k_1, \ldots, k_5$ are scalar parameters.
The result of differential elimination (computed using the software~\cite{structidjl}) is
\begin{equation}\label{eq:pp_ord_zero}
y y'' - k_1 k_3 y^2 + k_1 k_4 y^3 + k_3 y y' + k_5 k_2 y^2 u - k_4 y^2 y' - (y')^2 = 0.
\end{equation}
We will now apply Algortihm~\ref{alg:special_param} with $z_0 = y, z_1 = y', z_2 = y''$.
We have $d_0 = 1$ and $d_1 = 0$ (see~\ref{step:degrees}), so we make an ansatz $y'' = a_1 u + a_0$ in~\ref{step:Q} and obtain:
\[
u (k_2 k_5 y^2 + a_1 y) + k_1 k_4 y^3 - k_1 k_3 y^2 - k_4 y^2 y' + k_3 y y' + a_0 y - (y')^2 = 0.
\]
We obtain the following equations for $a_0$ and $a_1$:
\[
\begin{cases}
  k_2 k_5 y^2 + a_1 y = 0 \implies a_1 = -k_2 k_5 y,\\
  k_1 k_4 y^3 - k_1 k_3 y^2 - k_4 y^2 y' + k_3 y y' + a_0 y - (y')^2 = 0.
\end{cases}
\]
One can take $w = a_0$ and $q(w) = 0$ to be the last equation.
The equation of this surface is linear in $a_0$, so the surface has a parametrization:
\[
  y = x_1, \; y' = x_2, \; a_0 = -k_1 k_4 x_1^2 + k_1 k_3 x_1 + k_4 x_1 x_2 - k_3 x_2 + \frac{x_2^2}{x_1}.
\]
We follow the proof of Lemma~\ref{lem:linear_system} and obtain a realization of~\eqref{eq:pp_ord_zero}:
\[
\begin{cases}
x_1' = x_2,\\
x_2' = -k_2 k_5 x_1 u - k_1 k_4 x_1^2 + k_1 k_3 x_1 + k_4 x_1 x_2 - k_3 x_2 + \frac{x_2^2}{x_1},\\
y = x_1
\end{cases}
\]
Note that, in the resulting ODE system, one can straighforwardly reduce the dimension of the parameter space by setting $k_6 := k_2k_5$ and thus providing a reparametrization of the original model.
\end{example}

\begin{example}[Predator-prey model, continued]\label{ex:predator_prey2}
  This example is version of Example~\ref{ex:predator_prey1} with a different choice of the output variable:
  \[
  \begin{cases}
    x_1' = k_1 x_1 - k_2 x_1 x_2,\\
    x_2' = -k_3 x_2 + k_4 x_1 x_2 + k_5 u,\\
    y = x_2.
   \end{cases}
   \]
   The input-output equation for this model is
   \begin{multline*}
     -k_5 k_1 y u + k_5 k_2 y^2 u + k_5 y u' - k_5 y' u + k_1 k_3 y^2 \\+ k_1 y y' - k_2 k_3 y^3 - k_2 y^2 y' - y y'' + (y')^2 = 0.  
   \end{multline*}
   We will follow the same strategy as Algorithm~\ref{alg:first_order}.
   Plugging~$y'' = au'' +c$ and~$y' = au'+b$ yields
   $$a^2u''^2+r(a,b,c,u,u') = 0$$
   for some rational function~$r$. Thus,~$a=0$,~$y' = b$ and~$y'' = pu' + q$, where~$b,p,q$ do not involve~$u'$ but can, in principle, involve~$u$. 
   The following equation is the result of plugging these expressions into the initial equation:
   \begin{multline*}
       (k_5 y - py)u' - k_2k_3y^3 + k_5k_2y^2u - bk_2y^2 + k_1k_3y^2 -\\- k_5k_1yu + bk_1y - bk_5u + b^2 - yq = 0.
   \end{multline*}
   The coefficient at~$u_1$ has to be zero, therefore,~$p=k_5$.
   
   Since the constant coefficient is linear in~$q$, it defines a unirational surface with parametrization~$y_0=v_1, b=v_2$ and
   $$q = -k_2k_3v_1^3+k_5k_2v_1u-v_2k_2v_1+k_1k_3v_1-k_5k_1u+v_2k_1-\dfrac{v_2k_5u}{v_1}+\dfrac{v_2^2}{v_1}.$$
   
   Since~$p = \dfrac{\operatorname{d}v_2}{\operatorname{d}u}$, one can conclude that~$v_2=k_5u+x_1$, where~$x_1$ is a transcendental constant, and~$v_2=x_2$. Thus,
   \begin{multline*}
   q(x_1,x_2,u)= (-k_2k_5x_2 + k_5k_1 + \dfrac{k_5^2u}{x_2} + \dfrac{k_5x_1}{x_2})u +\\+ k_2k_3x_2^2 + k_2k_5x_2u - k_1k_3x_2 - k_1k_5u + k_2x_1x_2 - \dfrac{k_5^2u^2}{x_2} - k_1x_1 - 2k_5u\dfrac{x_1}{x_2} - \dfrac{x_1^2}{x_2}
   \end{multline*}
   and we obtain the following realization of the original equation:
   $$
   \begin{cases}
   x_1' = q(x_1,x_2,u),\\
   x_2' = x_1+k_5u,\\
   y = x_2.
   \end{cases}
   $$
\end{example}

\end{document}